\newcommand{\mysymbol}[1]{{\mbox{\raisebox{-0.3em}{\epsfysize=1.2em\epsfbox{#1}}}}}
\newcommand{\tileC}{\mysymbol{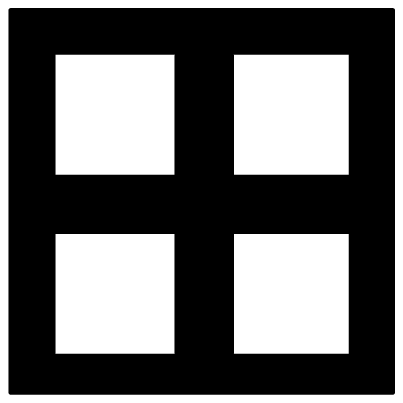}}       
\newcommand{\tileL}{\mysymbol{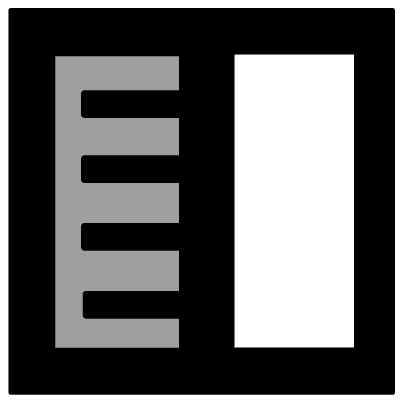}}      
\newcommand{\tileN}{\mysymbol{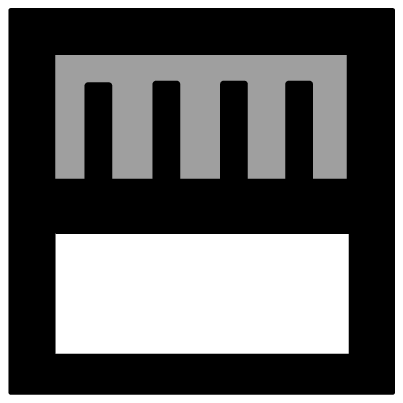}}       
\newcommand{\tileS}{\mysymbol{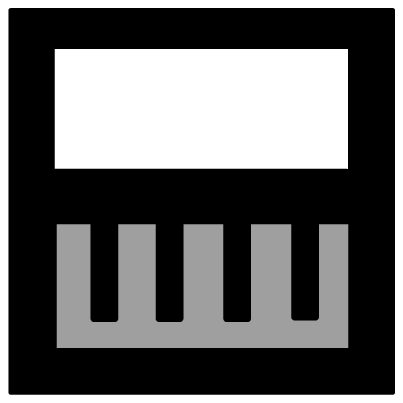}}       
\newcommand{\tileR}{\mysymbol{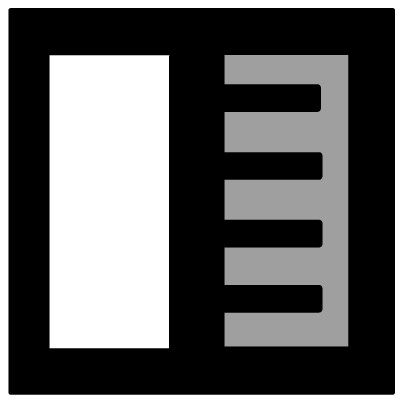}}       
\newcommand{\tileSR}{\mysymbol{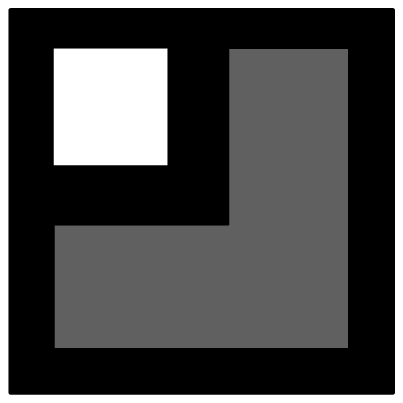}}
\newcommand{\tileSL}{\mysymbol{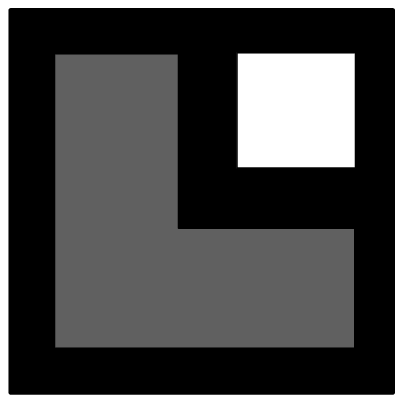}}
\newcommand{\tileNR}{\mysymbol{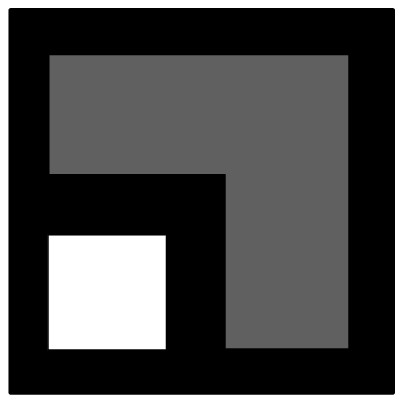}}
\newcommand{\tileNL}{\mysymbol{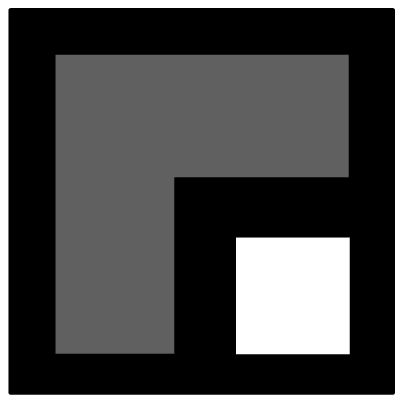}}
\newcommand{\tileW}{\mysymbol{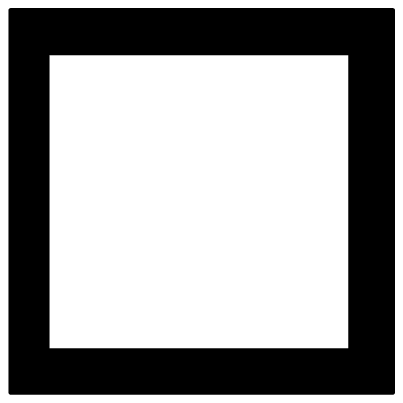}}
\newcommand{\tileB}{\mysymbol{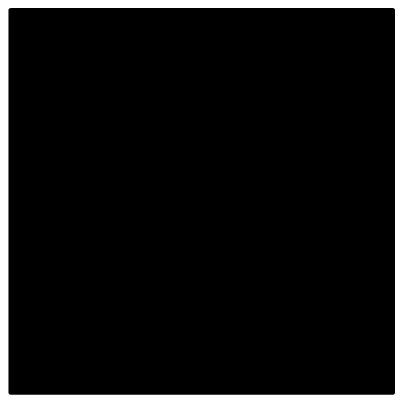}}
\newcommand{\tileV}{\mysymbol{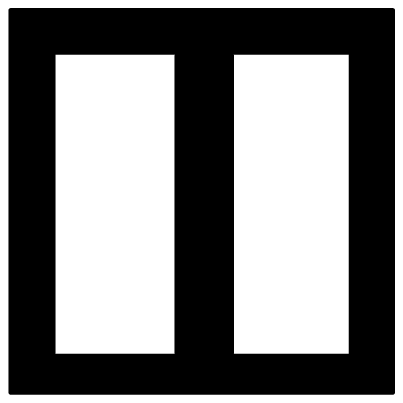}}
\newcommand{\tileH}{\mysymbol{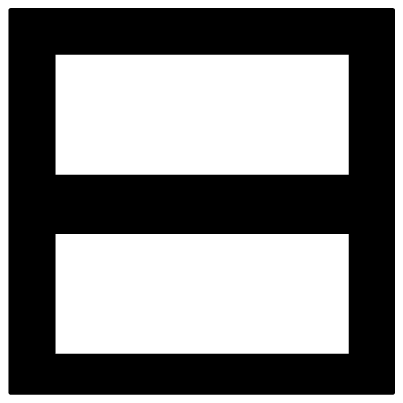}}
\renewcommand{\paragraph}[1]{\textbf{\textit{#1}}}
\journalname{Journal of Statistical Physics}
\begin{document}

\title{Translationally invariant universal classical Hamiltonians}


\author{Tamara Kohler         \and
        Toby Cubitt 
}


\institute{T. Kohler \at
              Department of Computer Science, University College London, London, WC1E 6BT, UK \\
              \email{tamara.kohler.16@ucl.ac.uk}           
           \and
           T. Cubitt \at
              Department of Computer Science, University College London, London, WC1E 6BT, UK \\
              \email{t.cubitt@ucl.ac.uk}
}

\date{Received: date / Accepted: date}

\maketitle

\begin{abstract}
Spin models are widely studied in the natural sciences, from investigating magnetic materials in condensed matter physics to studying neural networks. Previous work has demonstrated that there exist simple classical spin models that are universal: they can replicate -- in a precise and rigorous sense -- the complete physics of any other classical spin model, to any desired accuracy. However, all previously known universal models break translational invariance. In this paper we show that there exist translationally invariant universal models. Our main result is an explicit construction of a translationally invariant, 2D, nearest-neighbour, universal classical Hamiltonian with a single free parameter. The proof draws on techniques from theoretical computer science, in particular recent complexity theoretic results on tiling problems. Our results imply that there exists a single Hamiltonian which encompasses all classical spin physics, just by tuning a single parameter and varying the size of the lattice. We also prove that our construction is optimal in terms of the number of parameters in the Hamiltonian; there cannot exist a translationally invariant universal Hamiltonian with only the lattice size as a parameter.

\keywords{Hamiltonian simulation \and Universal Hamiltonians \and Classical spin physics}
\end{abstract}

\section{Introduction}
\label{intro}
Classical spin models are ubiquitous in statistical physics. They were first introduced to study magnetism in condensed matter physics \cite{Baxter:1982}, and have since been employed to study features of interacting systems in diverse areas of the natural and social sciences.


Recent work has demonstrated that there exist families of spin Hamiltonians that are universal, in the sense that they can replicate the physics of all other classical spin Hamiltonians \cite{Cubitt:2016}. One drawback to the universality result derived in \cite{Cubitt:2016} is that all the universal models found in the paper break translational invariance. In this paper we improve on the result by demonstrating that there exists a family of translationally invariant classical Hamiltonians which is universal.


The work on universal Hamiltonians arose in part due to previous work on the completeness of the partition functions of a set of classical spin models \cite{Nest:2008}, \cite{Karimipour:2012}, \cite{Cuevas:2009}, \cite{Duer:2009}, \cite{Zarei:2012}, \cite{Xu:2011}, where a model is said to be ``Hamiltonian complete'' if its partition function can replicate (up to a constant multiplicative factor) the partition function of any other model.

The requirements for a ``universal model'' in \cite{Cubitt:2016} are more demanding: a model is universal if for any classical Hamiltonian $H'$ there exists a Hamiltonian from the model $H$ that can simulate $H'$,
where ``simulate'' means reproducing not only the partition function but also the energy levels and spin configurations.
(See Definition \ref{simulation_definition} or \cite{Cubitt:2016} for the mathematically rigorous definition.)

The main technical result in \cite{Cubitt:2016} is that a spin model is a universal model if and only if it is closed, and its ground state energy problem admits a polynomial-time faithful reduction from SAT.\footnote{Any Hamiltonian whose ground state energy problem is NP-complete will admit a polynomial time reduction from SAT, so NP-completeness of the ground state energy problem is a necessary (although not sufficient) condition for universality.} The authors then show that the 2D Ising model with fields meets these criteria, hence is a universal classical Hamiltonian. They also prove universality of a number of other simple spin models, including the 3D Ising model and the Potts model. However all the models shown to be universal in \cite{Cubitt:2016} require the ability to tune individual interaction strengths in the Hamiltonian, raising the question of whether the universality is a consequence of breaking translational invariance in this way.

In \cite{Gottesman:2009} it was shown that the translationally invariant tiling problem is NEXP-complete, demonstrating that translational invariance is not a barrier to complexity. In this paper we go further, and show that translational invariance is not a barrier to universality. Our main result is that there exists a family of translationally invariant Hamiltonians, parameterised by a single parameter $\tilde{\Delta}$, which is universal. This translationally invariant universal model is defined on a 2D square lattice of spins with nearest-neighbour interactions. The Hamiltonian can be written in the form:
\begin{equation}
H(\sigma) = \tilde{\Delta}H_1 + H_2
\end{equation}
where $H_1$ and $H_2$ are fixed, translationally invariant, nearest-neighbour Hamiltonians. By varying the size of the lattice that the Hamiltonian is acting on, and tuning the $\tilde{\Delta}$ parameter in the construction, this family of Hamiltonians can replicate all classical spin physics in the strong sense of \cite{Cubitt:2016}. 

\section{Main results}\label{main_results}

In this section we give the main results and a high-level overview of the construction. See Sections \ref{NEXP-complete},\ref{universality} for full technical details of the constructions and proofs.

\subsection{Universal model described by two parameters}
Our main result constructs a single-parameter translationally invariant universal model. But we also prove universality of a two-parameter model, where the scaling of the parameters in the two-parameter model is better in some cases than the scaling in the one-parameter model. We will consider the simpler two-parameter result first, as the one-parameter construction builds on this.

\begin{theorem}
There exist fixed two-body interactions $h_{3,\mathrm{vert}}$, $h_{3,\mathrm{horiz}}$, $h_{4,\mathrm{vert}}$, $h_{4,\mathrm{horiz}}$ such that the family of translationally-invariant Hamiltonians on a 2D square lattice of size $N$ with $h_3,h_4$ as nearest-neighbour interactions:
\begin{equation}\label{h}
  H = \Delta H_3 +  \alpha H_4 \quad \text{with} \quad
  H_x = \sum_{\langle i,j\rangle_\mathrm{row}} h_{x,\mathrm{horiz}}^{(i,j)}
        + \sum_{\langle i,j\rangle_\mathrm{col}} h_{x,\mathrm{vert}}^{(i,j)}
\end{equation}
is an efficient universal model, where $\Delta$, $\alpha$ are parameters of the Hamiltonian, and the sums are over adjacent sites along the rows and columns, respectively.
\end{theorem}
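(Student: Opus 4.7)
The plan is to apply the universality criterion from \cite{Cubitt:2016}: a closed family is efficiently universal iff its ground state energy problem admits an efficient, faithful reduction from SAT. Closedness is built into the form of the family (arbitrary real scalings $\Delta, \alpha$ of two fixed local interactions), so the substantive content is exhibiting the reduction in a translationally invariant way. Under translational invariance the only design freedom is the lattice side length $N$ and the two scalars, so nontrivial input to the simulation must be supplied by letting $N$ encode the target problem in its binary expansion and forcing this encoding to propagate through the lattice. This is exactly the regime of the translationally invariant tiling problem, which \cite{Gottesman:2009} shows is NEXP-complete, and I would lift that construction to the Hamiltonian setting.

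Concretely, I would design the on-site Hilbert space to carry the states of a Wang tile set and choose $h_{3,\mathrm{horiz}}, h_{3,\mathrm{vert}}$ to assign energy penalties to every mismatched pair of adjacent tile edges. Scaling by a sufficiently large $\Delta$ (polynomial in $N$) makes tile constraints dominant, so that the low-energy configurations of $\Delta H_3$ are precisely the valid tilings on the $N \times N$ lattice. Following \cite{Gottesman:2009}, the tile set is chosen so that any valid tiling encodes the execution of a universal Turing machine on input equal to the binary expansion of $N$. The TM reads $N$, decodes a classical description of an arbitrary target Hamiltonian $H'$ together with its couplings, and writes these into ancillary tile registers on a designated output subregion. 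The second interaction $h_{4,\mathrm{horiz}}, h_{4,\mathrm{vert}}$, scaled by $\alpha$, is designed to act on auxiliary ``output'' spins carried by those tiles, reading off the locally computed couplings and imprinting the corresponding two-body interactions, so that the restriction of $H$ to the low-energy sector equals $\alpha$ times the spectrum of $H'$ up to an additive constant. Composing with the already established universality of the Ising model with fields from \cite{Cubitt:2016} then gives simulation of arbitrary classical spin Hamiltonians.

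The main obstacle will be establishing that the resulting map is a faithful simulation in the sense of Definition \ref{simulation_definition}, rather than merely matching ground state energies: the entire low-energy spectrum of $H$ must be in bijection with that of $H'$, and the bijection must come from a local encoding of spin configurations. This forces careful control of the ``garbage'' degrees of freedom produced by the embedded TM computation, which must either be frozen into a unique state by the tiling rules or contribute only an additive constant independent of the output spins. I would also have to choose $\Delta$ to scale polynomially with $N$ and $\alpha$ polynomially with the target couplings, so that neither partial tiling-rule violations nor non-computational tilings can leak below the energy threshold of the simulated spectrum. Once this spectral cleanliness is in place, the poly-time SAT reduction required by the criterion of \cite{Cubitt:2016} is inherited from NEXP-completeness of translationally invariant tiling, completing the proof.
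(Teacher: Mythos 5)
Your overall strategy matches the paper's: encode the target Hamiltonian $H'$ in the binary expansion of the lattice side $N$, lift the Gottesman--Irani translationally invariant tiling construction to a spin Hamiltonian via a tile-mismatch penalty term $H_3$, have the encoded Turing machine read $N$ together with a block of unconstrained ``physical'' spins, and then compose with the known universality of the 2D Ising model with fields to conclude. The weakest link, however, is precisely the one you correctly identify as hard but then sketch incorrectly: the mechanism by which the second interaction $\alpha H_4$ turns the outcome of the tile-encoded computation into the right low-energy spectrum.

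You propose that the Turing machine decodes the couplings of $H'$ into ancillary tile registers and that $h_{4}$ then ``reads off the locally computed couplings and imprints the corresponding two-body interactions.'' This cannot work with a single \emph{fixed} two-body interaction $h_4$ and a single global scalar $\alpha$: if the register at a site can hold one of $d$ coupling values, $h_4$ would need a spectrum that tracks those $d$ values, and to approximate arbitrary real couplings you would need unbounded local dimension -- contradicting the requirement that $h_4$ be fixed. The paper's construction avoids this entirely by never imprinting the interactions of $H'$ at all. Instead, the Turing machine on the second layer directly \emph{evaluates} the energy $H'(\sigma)$ of the physical-spin configuration $\sigma$, expresses it as an integer multiple $M$ of a fixed quantum $\alpha$ (taken to be the greatest common measure of the relevant energy levels), and outputs $M$ copies of a special ``flag'' alphabet symbol $\gamma$. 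Tiling rules force exactly $M$ flag tiles $\langle\tilde\gamma\rangle$ into the top interior row and nowhere else, and $H_4$ is nothing but the single-site counting Hamiltonian $\sum_j f_{\tilde\gamma}(\sigma_j)$ assigning energy $+1$ per flag tile. Thus $\alpha H_4$ contributes $M\alpha \approx H'(\sigma)$, with the energy written in unary so that a fixed-dimension local interaction suffices. This unary-output trick is the key missing idea in your sketch, and without it the rest does not close.

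Two smaller discrepancies: $\Delta$ is not chosen ``polynomial in $N$'' but is a genuinely free parameter that must exceed $\max_{\sigma'} H'(\sigma')$, as required by Definition~\ref{simulation_definition} (the cutoff must be arbitrarily adjustable); and $\alpha$ plays the role of a precision quantum, not a scaling of the target couplings. You also gesture at both the closedness-plus-SAT-reduction criterion and the direct simulate-then-compose route; the paper takes the latter, proving Lemma~\ref{two-level} directly and then composing with Ising universality, and additionally proves a separate lemma that the partition-function condition holds iff the extra degeneracy introduced is uniform across energy levels -- a point you flag but do not resolve.
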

\noindent (The precise meaning of efficiency in the translationally invariant case is given in Section \ref{proof}.)

The universality construction relies on the fact that it is possible to encode the evolution of a Turing machine into the ground state of a nearest neighbour, translationally invariant spin Hamiltonian on a 2D lattice. A Turing machine is a model of computation in which a head reads and writes symbols from some finite alphabet on a tape and moves left or right, following a finite set of transition rules. The transition rules are the same regardless of the location of the head along the tape. Full details of how a Turing machine can be encoded into a Hamiltonian are given in Section \ref{gottesman_details}.

In Eq. \ref{h}, the $H_3$ term is a Hamiltonian which encodes a Turing machine. The $H_3$ Turing machine reads in a description of the Hamiltonian to be simulated, $H'$, and a spin configuration, $\sigma$, and computes the energy of $\sigma$ with respect to $H'$. The output of the $H_3$ Turing machine is a certain number of `flag' spin states, which pick up energy from the $H_4$ term in Eq. \ref{h}. In this way $H$ reproduces the energy levels of $H'$ below some energy cut-off $\Delta$.

The $H_3$ and $H_4$ Hamiltonians in Eq. \ref{h} are both fixed - which raises the question of where the information about $H'$ is encoded? The basic idea is to encode information about $H'$ into the binary expansion of the size of the lattice, $N$ (the same trick was used in \cite{Gottesman:2009}). The computation encoded into $H_3$ begins by extracting the binary representation of $N$ from the grid size, by incrementing a binary counter for a number of time-steps equal to the grid size. The binary representation of $N$ is then used as input for the main computation. (Full details of the construction are given in Section \ref{proof}.)

\subsection{Universal model described by one parameter}
We can now turn to the one parameter universal model.

\begin{theorem}
There exist fixed two-body interactions $h_{1,\mathrm{vert}}$, $h_{1,\mathrm{horiz}}$, $h_{2,\mathrm{vert}}$, $h_{2,\mathrm{horiz}}$ such that the family of translationally-invariant Hamiltonians on a 2D square lattice of size $N$ with $h_1,h_2$ as nearest-neighbour interactions:
\begin{equation}
  H = \tilde{\Delta} H_1 +  H_2 \quad \text{with} \quad
  H_x = \sum_{\langle i,j\rangle_\mathrm{row}} h_{x,\mathrm{horiz}}^{(i,j)}
        + \sum_{\langle i,j\rangle_\mathrm{col}} h_{x,\mathrm{vert}}^{(i,j)}
\end{equation}
is an efficient universal model, where $\tilde{\Delta}$ is a parameter of the Hamiltonian, and the sums are over adjacent sites along the rows and columns, respectively.
\end{theorem}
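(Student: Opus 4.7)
The plan is to reduce the one-parameter theorem to the two-parameter result by absorbing the scaling coupling $\alpha$ into the construction itself, leaving $\tilde{\Delta}$ as the sole tunable parameter. In the two-parameter construction, the $H_3$ Turing machine reads a description of $H'$ from the binary expansion of $N$ and outputs a number of ``flag'' spin states; the coupling $\alpha$ then converts the flag count into the target simulated energy, while $\Delta$ enforces the energy cutoff. Since $N$ is already being used as an information channel, the natural strategy is to also encode a sufficiently accurate rational approximation of $\alpha$ into $N$, and to modify the Turing machine so that it internally applies this scale factor before producing its flag output.

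Concretely, I would first enlarge the input format that the encoded Turing machine extracts from the binary expansion of $N$, prepending a fixed-width block whose bits specify the rescaling factor. After computing the energy of the input configuration $\sigma$ with respect to $H'$, the machine multiplies the result by the extracted factor and then writes the corresponding number of flag spins. Call the resulting translationally invariant nearest-neighbour interactions $h_{1,\mathrm{horiz}}, h_{1,\mathrm{vert}}$, and let $h_{2,\mathrm{horiz}}, h_{2,\mathrm{vert}}$ be fixed two-body terms that assign one unit of energy per flag spin (the analogue of $h_{4,\cdot}$ with $\alpha = 1$ hard-wired). Assembling these into $H_1$ and $H_2$ as in the theorem statement yields
\begin{equation}
  H = \tilde{\Delta}\, H_1 + H_2,
\end{equation}
and the simulation argument of the two-parameter case then applies essentially verbatim: for $\tilde{\Delta}$ chosen larger than the relevant target energies, the spectrum below the cutoff reproduces that of $H'$ with the correct eigenvalues and spin configurations, while higher-energy ``computational'' states are pushed above the cutoff.

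The main obstacle I anticipate is efficiency. Encoding both $H'$ and the rescaling data into $N$ enlarges $N$ by (at worst) a polynomial factor, which in turn enlarges the computation-time budget available to the Turing machine. One must verify that these polynomial blow-ups compose compatibly, so that the lattice size remains polynomial in the description length of $H'$ and the required accuracy, and that the modified Turing machine still halts within the time window afforded by the new lattice dimension; this is where the ``efficient universal model'' bookkeeping from the two-parameter proof needs to be redone carefully. A secondary subtlety is round-off: since $\alpha$ is only encoded to finite precision in $N$, the simulated energies pick up a controllable perturbation which must be folded into the approximation parameter of Definition \ref{simulation_definition}, with the precision of the encoded $\alpha$ chosen large enough (but still only polynomially many bits) to achieve the requested accuracy.
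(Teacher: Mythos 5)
Your proposal has a genuine gap that sinks the approach as stated. In the two-parameter construction, the flag count output by the Turing machine is an integer $M$, and the energy contributed by the flags is $M\alpha$; the continuous parameter $\alpha$ (chosen as the greatest common measure of the energy levels of $H'$) is precisely what lets an \emph{integer} flag count hit an arbitrary rational target energy exactly. If you hard-wire the flag energy to $1$, the total flag energy is always an integer, and integers are not dense in $\mathbb{R}$. Having the Turing machine internally multiply by a rational factor read from the binary expansion of $N$ does not repair this: the output must still be an integer number of flag spins, so after rounding you incur a quantization error of up to half a unit of flag energy. Crucially, this error is \emph{not} controllable by increasing the precision of the encoded scale factor — it is a floor imposed by the discreteness of the flag count, so it cannot be driven below an arbitrary $\delta$ as Definition \ref{simulation_definition} requires. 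What you call a ``secondary subtlety'' of round-off is in fact the central obstruction.

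The paper circumvents this by using \emph{two} flag species, $\tilde{\gamma}$ and $\tilde{\eta}$, with fixed incommensurate energies $+\sqrt{2}$ and $-1$. The Turing machine outputs $a$ copies of $\tilde{\gamma}$ and $b$ copies of $\tilde{\eta}$, yielding energy $a\sqrt{2}-b$. Lemma \ref{rotations} (density of irrational rotation orbits) is what makes this work: the set $\{a\sqrt{2}-b : a,b \in \mathbb{Z}^{\geq 0}\}$ is dense in $\mathbb{R}$, so arbitrary energies can be approximated to any precision with integer flag counts. The price is the introduction of negative local terms (the $-1$ contribution), which in turn forces the offset $\tilde{\Delta} = \Delta + N^2$ so that non-optimal tilings, which could now pick up energy bonuses, are still pushed above the cutoff. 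Your proposal misses both the density lemma and the corresponding bookkeeping for negative terms; without something playing the role of the irrational pair of flag energies, a single fixed flag energy cannot yield a universal model to arbitrary precision.
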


The one-parameter construction works in much the same as the two-parameter construction. The $H_1$ term is again a Hamiltonian which encodes a Turing machine, and the $H_2$ term is used to produce the required energy. By using a simple lemma about irrational numbers we can remove the need for the $\alpha$ parameter in the construction. The price we pay is that the $H_2$ Hamiltonian now contains negative energy terms. These result in a larger scaling of the $\tilde{\Delta}$ parameter in the Hamiltonian (as compared with the $\Delta$ in the two-parameter construction), to deal with the fact that invalid spin configurations may pick up negative energy bonuses. Full details are given in Section \ref{two-parameters}.

Our final result concerns the impossibility of constructing zero-parameter universal models, where the only thing that can vary is the number of spins on which the Hamiltonian acts:
\begin{theorem}
It is not possible to construct a translationally invariant universal model whose only parameter is the number of spins on which the Hamiltonian acts.
\end{theorem}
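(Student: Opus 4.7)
The plan is to argue by contradiction, showing that any zero-parameter translationally invariant family has a simulation precision bounded below by a quantity growing with target complexity, which prevents the arbitrarily fine simulation required by universality.

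First I would unwind Definition~\ref{simulation_definition}: simulation requires $\|H|_\mathrm{low} - (\alpha V H' V^\dagger + cI)\| \leq \epsilon$ in simulator energy units, so the \emph{target-side} precision that universality must drive to zero is $\epsilon' = \epsilon/\alpha$. In the penalty-based constructions used throughout this paper, two competing constraints control $\epsilon'$: (i) the scaled target must fit below the penalty gap, giving $\alpha \lesssim \tilde\Delta/\mathrm{range}(H')$; and (ii) standard perturbation theory between the low-energy and high-energy sectors gives $\epsilon \gtrsim \|H_2\|^2/\tilde\Delta$. Combining these yields $\epsilon' \gtrsim \|H_2\|^2\,\mathrm{range}(H')/\tilde\Delta^2$, so arbitrary $\epsilon' \to 0$ requires arbitrarily large penalty coefficient $\tilde\Delta$.

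Next, I would observe that in a zero-parameter family $\{H_N\}_{N\in\mathbb{N}}$ the coefficient playing the role of $\tilde\Delta$ is a fixed constant baked into the local interactions $h_\mathrm{vert},h_\mathrm{horiz}$. Because the interactions are nearest-neighbour, the energy cost of a single penalty violation is a fixed $O(1)$ quantity independent of $N$, so the effective $\tilde\Delta$ in the precision bound is pinned. Meanwhile $\|H_2\|$ can only grow with $N$ as more local terms are summed, and $N$ itself must grow with the target description length since the target $H'$ has to be encoded into $N$'s binary expansion (as in the constructions of Theorems~1 and~2). Consequently the lower bound on $\epsilon'$ grows polynomially with target complexity, and no choice of $N$ can drive $\epsilon'$ below an absolute positive floor uniformly in target size, contradicting universality.

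The main obstacle will be making this bound robust against arbitrary zero-parameter constructions, not just ones of the explicit form $\tilde\Delta H_1 + H_2$ used in Theorem~2. In particular, one must rule out the possibility that a cleverly chosen fixed interaction could engineer an effective penalty scaling super-linearly with $N$, or could exploit the freedom in the simulation scaling $\alpha$ to sidestep the trade-off above. The cleanest route is likely to prove a general precision lower bound for classical simulation in terms of intrinsic scales of the simulator (largest interaction strength, norms of the relevant sub-Hamiltonians), and then combine it with the information-theoretic observation that any encoding of an arbitrary target into a fixed-interaction family forces $N$ to scale at least polynomially with target description length --- leaving no room for a precision that scales inversely with complexity.
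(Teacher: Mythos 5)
Your plan has a genuine gap, and the mechanism it relies on does not match the setting of the paper. The perturbation-theoretic lower bound $\epsilon \gtrsim \|H_2\|^2/\tilde\Delta$ is a feature of \emph{quantum} effective-Hamiltonian simulation; in the classical constructions here there is no perturbation theory at play. Below the cut-off $\Delta$ the simulator reproduces energies \emph{exactly} (or to whatever discretisation $\delta$ you want) because the flag spins count out the energy directly — the additive error is not governed by a Schrieffer--Wolff-type second-order term. So the claimed precision floor $\epsilon' \gtrsim \|H_2\|^2\,\mathrm{range}(H')/\tilde\Delta^2$ has no classical analogue, and the contradiction you try to build on it does not materialise. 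Relatedly, a precision argument alone is unlikely to close the proof: a fixed finite set of local energy levels, summed over $N^2$ sites, can in principle produce an increasingly fine (even dense) set of total energies as $N\to\infty$, so ``fixed local scale'' does not by itself pin the achievable target-side precision away from zero.

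You do correctly put your finger on the essential structural fact — that in a zero-parameter model the local interaction strengths (hence the spectrum of each $h$) are absolute constants — but the paper exploits this very differently. Its observation is that a single spin flip changes the total energy by at most $\mu_{\max}-\mu_{\min}$, a constant independent of $N$. Since the energy cut-off $\Delta$ must be allowed to be arbitrarily large while target energies stay fixed, once one configuration lies below the cut-off, \emph{every} configuration reachable by single-spin flips also lies below it; by connectivity of the hypercube of configurations, all $d^{S}$ configurations must then be below the cut-off and map to energy levels of $H'$. The contradiction is then number-theoretic, via the partition-function (condition~3 of Definition~\ref{simulation_definition}): reproducing $Z_{H'}$ up to a constant factor forces a uniform degeneracy $\mu$ on each target level, so $d^{S}/M \in \mathbb{Z}$ where $M$ is the number of target configurations — impossible when $M$ is chosen coprime to $d$. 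Your sketch omits this degeneracy/counting step entirely, yet it is where the actual impossibility comes from; without it the ``everything is below the cut-off'' observation is not itself a contradiction. I would drop the perturbation-theory framing, keep the constant-local-scale observation, reformulate it as the bounded single-spin-flip energy change, and then add the partition-function degeneracy argument to close the proof.
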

\noindent Hence our one-parameter model is optimal in terms of the number of parameters required for a universal Hamiltonian.

The existence of translationally invariant universal models implies that every hardness result known about general classical Hamiltonians can now be extended to translationally invariant classical Hamiltonians, where the results are shifted up in time-complexity by an exponential factor due to the way problem instances are encoded. (Details of this are given in Section \ref{proof}.)


\section{Preliminaries}
\subsection{Classical spin Hamiltonians}
Discrete spins (also known as Ising spins) are variables which can take on values in some set of states $\mathcal{S}$. Given a set of spins $\{ \sigma_i\}$ for $i \in \{1,...,n\}$, a spin configuration assigns a state from $\mathcal{S}$ to each spin $\sigma_i$. A classical spin Hamiltonian, $H$, is a function, $H: \mathcal{S}^{\times n} \rightarrow \mathbb{R}$ which specifies the energy $H(\sigma)$ of each spin configuration $\sigma = \sigma_1\sigma_2\dots\sigma_n \in \mathcal{S}^{\times n}$.
We can also consider continuous spins, represented as unit vectors $\sigma_i \in \mathbb{S}^D$, where $\mathbb{S}^D$ is the $D$-dimensional unit sphere.

We refer to families of related spin Hamiltonians as ``spin models''. In this work, the spin models we consider will be translationally invariant Hamiltonians on a 2D square lattice with nearest-neighbour interactions, with some small number (1 or 2) of global parameters. Different Hamiltonians from the same model therefore differ only in the size of the lattice and the values of the parameters.

\subsection{$k$-local Hamiltonians}
Throughout our discussion of universal models we will assume that the Hamiltonians we are simulating are $k$-local:
\begin{equation}
H = \sum_{i} H^{(i)}
\end{equation}
where each $H^{(i)}$ acts non-trivially on at most $k$ spins. Note that there is no assumption that the $H^{(i)}$ are geometrically local, we only require that the number of spins involved in every interaction is upper bounded by $k$.

While at first restricting ourselves to $k$-local Hamiltonians may appear restrictive, we place no restriction on how large $k$ is allowed to be, so a global Hamiltonian on $n$ spins is just a special case of a $k$-local Hamiltonian for which $n=k$. Phrasing everything in terms of $k$-local Hamiltonians allows us to derive efficiency results at the same time as universality results.

The choice of defining efficiency in terms of $k$-local Hamiltonians is convenient, and covers the most important case. But it is more restrictive than is really required. When we talk about simulating a Hamiltonian efficiently, the most general requirement is that the number of parameters specifying the simulator Hamiltonian scales at most polynomially in terms of the number of parameters required to describe in the original system. For global Hamiltonians with no structure this requirement becomes a triviality, as the number of bits of information needed to describe a global Hamiltonian with no structure is anyway exponential in the number of spins. However, one could construct non-local Hamiltonians with a structure which ensures they have efficiently computable energy levels.
In the interests of simplicity of exposition, we will only give efficient explicit constructions for $k$-local Hamiltonians. But our constructions can easily be generalised to give efficient simulations of all Hamiltonians for which the energy of a given spin configuration can be computed in time $O(e^n)$, where $n$ is the number of spins in the system. The energy of $k$-local Hamiltonians can be computed in time $\text{poly}(n)$, so $k$-local Hamiltonians are well within this bound.

\subsection{Simulation and universality}
A rigorous definition of what it means for one classical Hamiltonian to simulate another was formulated in \cite{Cubitt:2016} (\cite{Cubitt:2017} extends this defintion to the quantum case):

\begin{definition}[Hamiltonian simulation (definition from main text of \cite{Cubitt:2016}\footnote{We have made the bound on the error in the partition function more precise than that in \cite{Cubitt:2016}.})]  \label{simulation_definition}
We say that a spin model with spin degrees of freedom $\sigma=\sigma_1,\sigma_2,\dots$ can \emph{simulate} $H'$ if it satisfies all three of the following:
\begin{enumerate}
\item For any $\Delta > \max_{\sigma'} H'(\sigma')$ and any $0 < \delta < 1$, there exists a Hamiltonian $H$ in the model whose low-lying energy levels $E_\sigma=H(\sigma)<\Delta$ approximate the energy levels $E'_{\sigma'}=H'(\sigma')$ of $H'$ to within additive error $\delta$.
 \label{part:eigenvalues}
\item For every spin $\sigma'_i$ in $H'$, there exists a fixed subset $P_i$ of the spins of $H$ (independent of $\Delta$) such that states of $\sigma'_i$ are uniquely identified with configurations of $\sigma_{P_i}$, such that $|E'_{\sigma'} - E_\sigma| \leq \delta$ for any energy level $E_\sigma<\Delta$. We refer to the spins $P=\cup P_i$ in the simulation that correspond to the spins of the target model as the ``physical spins''.
 \label{part:eigenstates}
\item The partition function $Z_H(\beta) = \sum_\sigma e^{-\beta H(\sigma)}$ of $H$ reproduces the partition function $Z_{H'}(\beta) = \sum_{s'} e^{-\beta H'(s')}$ of $H'$ up to constant rescaling, to within arbitrarily small relative error:
\begin{equation}
\left| \frac{Z_{H}(\beta) - \mu Z_{H'}(\beta)}{\mu Z_{H'}(\beta)} \right| \leq  \left(e^{\beta \delta} -1 \right) + O\left(\frac{e^{-\Delta}}{\mu Z_{H'}(\beta)}\right)
\end{equation}
for some known constant $\mu$.
 \label{part:partition_function}
\end{enumerate}
\end{definition}

\noindent In \cite{Cubitt:2016} an efficient universal model was defined as follows:
\begin{definition}[Efficient universal model (definition 4 from \cite{Cubitt:2016})] \label{univ}
We say that a universal model is efficient if, for any Hamiltonian $H' = \sum_{I=1}^mh_I$ on $n$ spins composed of $m$ separate $k$-body terms $h_I$, $H'$ can be simulated by some Hamiltonian $H$ from the model specified by $\text{poly}(m,2^k)$ parameters, and acting on $\text{poly}(n,m,2^k)$ spins.
\end{definition}
\noindent This definition will be too restrictive for the translationally invariant case, but is included here for completeness. A generalisation of this definition which applies to the translationally invariant case will be introduced in Section \ref{proof}.

\subsection{Computing energy levels of $k$-local Hamiltonians}
All of our constructions make use of a Turing machine, which reads in a description of a $k$-local Hamiltonian in binary, and outputs the energy of each $k$-local term in the Hamiltonian for a given spin configuration. We do not explicitly construct the transition rules for such a machine, appealing to the fact that calculating the energy levels of a $k$-local Hamiltonian term to any constant precision is evidently an efficiently computable function, and thus it is possible to construct a Turing machine which does this.

\subsection{Complexity classes}
In complexity theory problems are classified into complexity classes, defined by the amounts of certain computational resources needed to solve the problem.  There are a number of complexity classes which will be relevant for this work.

\begin{definition}[P: Polynomial time]
The class of decision problems solvable in polynomial time by a Turing machine.\footnote{A decision problem is a problem that can be phrased as a `YES / NO' question of the input parameters. Equivalently, it is a function $f:\{0,1\}^* \rightarrow \{0,1\}$.}
\end{definition}

\begin{definition}[NP: Non-deterministic polynomial time]
The class of decision problems decidable in polynomial time by a non-deterministic Turing machine. Equivalently, NP is the class of decision problems for which if the answer is YES then there is a proof, polynomial in the length of the input, that can be verified in P.
\end{definition}

\begin{definition}[EXP: Exponential time]
The class of all decision problems solvable in time $O(2^{p(n)})$ time, where $p(n)$ is a polynomial function of the length of the input, $n$.
\end{definition}

\begin{definition}[NEXP: Non-deterministic exponential time]
The class of decision problems decidable in exponential time by a non-deterministic Turing machine. Equivalently, the class of decision problems for which if the answer is YES then there is a proof, exponential in the length of the input, that can be verified in EXP.
\end{definition}

EXP and NEXP are the exponential time analogues of P and NP respectively.


\begin{definition}[Polynomial time reduction]
Problem $A$ reduces to problem $B$ if there exists a map $f: A \rightarrow B$ such that $b = f(a)$ is a YES instance of $B$ if and only if $a$ is a YES instance of $A$ and the map $f:A \rightarrow B$ is poly-time computable.
\end{definition}
If $A$ reduces to $B$ then we can solve $A$ by transforming it into $B$ and solving $B$. We use the notation $A \leq B$ to denote that $A$ reduces to $B$.

\begin{definition}
A problem $A$ is hard for a complexity class $\mathcal{C}$ if every problem in $\mathcal{C}$ can be reduced to $A$.
\end{definition}
\begin{definition}
A problem $A$ is complete for a complexity class $\mathcal{C}$ if $A$ is hard for $\mathcal{C}$ and $A$ is in $\mathcal{C}$.
\end{definition}
The complete problems for any particular complexity class can be considered the hardest problems in that complexity class.

A decision problem which will be useful in this work is the \textsc{Ground State Energy} problem (GSE):
\begin{definition}[\textsc{Ground State Energy}, Definition 8 from \cite{Cubitt:2016}] \label{gse_def}
\hfill\newline
The ground state energy problem of a model $\mathcal{M} = \{H_\alpha\}$, asks: given $H_\alpha \in \mathcal{M}$ and $c\in\mathbb{Q}$, is there a configuration of spins $\sigma$ such that $H_\alpha(\sigma) \leq c$.
\end{definition}

\section{NEXP-complete tiling construction}\label{NEXP-complete}
In this section we review the NEXP-complete tiling construction originally published in \cite{Gottesman:2009}, on which our universality construction is based. The tiling problem in \cite{Gottesman:2009} is formally defined as follows:
\begin{definition}[\textsc{Tiling}, Definition 2.1 from \cite{Gottesman:2009}] \label{tiling_def}
\hfill\newline
\textbf{Problem parameters:} A set of tiles $T = \{t_1,...,t_m\}$. A set of horizontal constraints $H \subseteq T \times T$ such that if $t_i$ is placed to the left of $t_j$, then it must be the case that $(t_i,t_j) \in H$. A set of vertical constraints $V \subseteq T \times T$ such that if $t_i$ is placed below $t_j$, then it must be the case that $(t_i,t_j) \in V$. A designated tile $t_1$ that must be placed in the four corners of the grid. \\
\textbf{Problem Input:} Integer $N$, specified in binary. \\
\textbf{Output:} Determine whether there is a valid tiling of an $N \times N$ grid.
\end{definition}

In order to demonstrate that this is NEXP complete,\footnote{Although tiling in general was already known to be NEXP-complete, the construction in \cite{Gottesman:2009} is the first reduction where the set of tiles, constraints and boundary conditions are kept fixed, and are not given as part of the input.}  Gottesman and Irani \cite{Gottesman:2009} make use of the fact that tiling is Turing complete \cite{Berger:1966,Robinson}; it is possible to construct sets of tiles and tiling constraints that mimic the behaviour of any Turing machine.

In Section \ref{section_turing} we outline the general idea behind encoding a Turing machine in tiling rules, before going on in Section \ref{gottesman_details} to discuss the specific tiling rules used in \cite{Gottesman:2009} for showing NEXP-completeness when the boundary conditions of the tiling problem specify the tile to be placed in each corner. It is clear that translational invariance is broken at the corners in this construction, but this construction provides the basis for versions of the problem with open and periodic boundary conditions \cite{Gottesman:2009}, which we cover in Sections \ref{open_bc} and \ref{periodic_bc} respectively. We include the tiling rule definitions, and some intuition for why they work, but refer the reader to \cite{Gottesman:2009} for proofs. The notation and tile markings in this section are taken directly from \cite{Gottesman:2009}. Finally in Section \ref{deterministic} we discuss a modification of the NEXP-complete tiling construction which will be useful in our universality proof.

\subsection{Encoding a Turing machine in tiling} \label{section_turing}
Tiling is known to be Turing complete, which means that any Turing machine can be represented by a set of tiling rules \cite{Berger:1966,Robinson}. The basic idea is that it is possible to construct tiling rules such that any row in a valid tiling represents the configuration of the Turing machine tape, internal state, and head position at a particular point in time, and the sequence of rows along the vertical direction represents the sequence of configurations in the time evolution of the Turing machine.

In order to make this explicit, we first review the definition of a Turing machine. We can define a Turing machine as a triple: $M = \langle Q, \Sigma, \delta \rangle$, where $Q$ denotes a non-empty set of states, $\Sigma$ is the Turing machine alphabet, and $\delta: Q \times \Sigma \rightarrow Q \times \Sigma \times \{L,R\}$ is the transition function (L/R denotes that the Turing machine head has moved to the left/right). The Turing machine will have a designated blank symbol $\# \in \Sigma$, starting state $q_0 \in Q$ and final state $q_F \in Q$ \cite{Hopcroft:1979}.

To encode this definition of a Turing machine into a tiling problem, \cite{Gottesman:2009} uses three different varieties of tile. The first variety is denoted $[a]$ where $a \in \Sigma$ is a Turing machine tape symbol. Tiles of this variety denote the symbol on the Turing machine tape away from the position of the Turing machine head. The second tile variety is denoted by a triple $\Sigma \times Q \times \{r,l\}$. These tiles denote the symbol on the Turing machine tape and the state of the head when the head has moved to a location, but not yet acted. The $\{r,l\}$ signify which direction the head came from in its last move. The third tile variety is also denoted by a triple $\Sigma \times Q \times \{R,L\}$. Tiles of this variety denote the state of the head and the tape symbol after the head has acted, and the $\{R,L\}$ denote which direction the head moved~\cite{Gottesman:2009}.

To encode a Turing machine in these tiles, one defines tiling rules that force each pair of adjacent rows to represent a valid update of head position, internal state and tape. The specific tiling rules used in \cite{Gottesman:2009} are given in Section \ref{gottesman_details}. An illustration of a portion of tiles that encodes a Turing machine, reproduced from \cite{Gottesman:2009}, is shown below. In this example the Turing machine is running from bottom to top on the tiling grid, and the Turing machine is undergoing the evolution $(a,q) \rightarrow (b,q',L)$ in the first move, and $(c,q') \rightarrow (f,q'',R)$ in the second move.
\begin{table}[H]
\centering
\begin{tabular}{|c | c | c |}
\hline
$[f, q'', R]$ & $[b,q'',l]$ & $[d]$ \rule[-2em]{0pt}{4.5em}\\
\hline
$[c,q',r]$ & $[b,q',L]$ & $[d]$ \rule[-2em]{0pt}{4.5em}\\
\hline
$[c]$ & $[a,q,r]$ & $[d,q,L]$ \rule[-2em]{0pt}{4.5em}\\
\hline
\end{tabular}
\end{table}
The $[d,q,L]$ in the bottom right corner is from the previous step of the Turing machine evolution. It indicates a Turing machine head which is now in state $q$ and has moved to the left. This is consistent with the $[a,q,r]$ in the bottom middle row which signifies that the state of the Turing machine is $q$, and that the Turing machine tape came from the right. The $[b,q',L]$ shows that the Turing machine is going to execute the step $(a,q) \rightarrow (b,q',L)$, and we can see in the adjacent tile $[c,q',r]$ that indeed the Turing machine head has moved to the left, and its state is now $q'$. Finally in the bottom row $[f, q'', R]$ indicates that the Turing machine is going to execute the step $(c,q') \rightarrow (f,q'',R)$, and the $[b,q'',l]$ tile to the right indicates that indeed the Turing machine tape has moved one step to the right, and is now in state $q''$.

\subsection{NEXP-complete tiling rules with corner tiles fixed}\label{gottesman_details}
We first review the boundary conditions and the resulting tile pattern on the border of the grid in \cite{Gottesman:2009}.

\paragraph{Boundary conditions:} The boundary conditions for the NEXP-complete tiling construction in \cite{Gottesman:2009} require a $\tileC$ to be placed in each of the four corners of the grid. There are then four other tiles which we refer to as boundary tiles $\tileL$, $\tileR$, $\tileN$, $\tileS$. The tiling rules for the boundary tiles are reproduced in Table \ref{boundary}.

\begin{table}
\centering
\begin{minipage}{0.51\textwidth}
\centering
\begin{tabular}{c c c c}
 &  & & Tile on right
 \end{tabular}
\begin{tabular}{c c| c c c c c c}
 &  & $\tileC$ & $\tileL$ & $\tileR$ & $\tileN$ & $\tileS$ & $*$ \\
 \hline
 & $\tileC$ & N & N & N & Y & Y & N \\
 Tile & $\tileL$ & N & N & Y & N & N & \\
 on & $\tileR$ & N & N & N & N & N & N \\
 left & $\tileN$ & Y & N & N & N & N & N \\
  & $\tileS$ & Y & N & N & N & Y & N \\
  & $*$ & N & N & & N & N &
\end{tabular}
\end{minipage}
\centering
\begin{minipage}{0.51\textwidth}
\centering
\begin{tabular}{c c c c}
 &  & & Tile on top
 \end{tabular}
\begin{tabular}{c c| c c c c c c}
 &  & $\tileC$ & $\tileL$ & $\tileR$ & $\tileN$ & $\tileS$ & $*$ \\
 \hline
 & $\tileC$ & N & Y & Y & N & N & N \\
 Tile & $\tileL$ & Y & Y & N & N & N & N  \\
 on & $\tileR$ & Y & N & Y & N & N & N \\
 bottom & $\tileN$ & N & N & N & N & N & N \\
  & $\tileS$ & N & N & N & Y & N &  \\
  & $*$ & N & N & N&  & N &
\end{tabular}
\end{minipage}
\caption{(Table~2 from \cite{Gottesman:2009}) Tiling rules for the boundary tiles. Here a $*$ denotes any interior tile, `N' indicates a disallowed pair of neighbouring tiles, and `Y' denotes an allowed pair of neighbouring tiles. In cases where the rule for the interior tile is not specified this is because the rule depends on the type of interior tile. }
\label{boundary}
\end{table}

These tiling rules enforce that the only place a $\tileL$ tile can go is on the left boundary, the only place a $\tileR$ tile can go is the right boundary, the only place a $\tileS$ tile can go is on the bottom boundary, and the only place a $\tileN$ tile can go is the top boundary. The tiling rules also enforce that the only tiles can go adjacent to a $\tileC$ tile are boundary tiles. If we consider the top boundary only $\tileN$ and $\tileS$ tiles can go to the left or right of a $\tileC$ tile, so the entire top boundary will have to be $\tileN$ or $\tileS$ tiles, but no tile can go below a $\tileS$ tile, so $\tileS$ tiles cannot go on the top boundary. This leaves the entire top boundary, except the corner tiles, as $\tileN$ tiles. Similar logic applies for the other boundaries, and we find that these tiles form a border for a grid, which can be seen in Fig. \ref{pic}.

These boundary tiles allow us to implement special conditions along the borders of the tiling grid, while only breaking translational invariance at the four corners. We will see in Sections \ref{open_bc} and \ref{periodic_bc} how the construction in \cite{Gottesman:2009} is modified to make it fully translationally invariant.

\begin{figure}
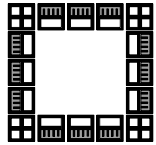

\centering
\begin{tabular}{c@{\extracolsep{0.1em}}c@{}c@{}c@{}c}
\tileC & \tileN & \tileN & \tileN & \tileC \\
\tileL & & & & \tileR \\
\tileL & & & & \tileR \\
\tileL & & & & \tileR \\
\tileC & \tileS & \tileS & \tileS & \tileC
\end{tabular}
\caption{(Figure~1 from~\cite{Gottesman:2009}) The only allowed tiling of the border of an $N = 5$ grid.}
\label{pic}
\end{figure}


\paragraph{Turing machine tiling rules:} The tiling rules required to simulate a Turing machine running from bottom to top on a grid using a tiling grid are given in Table \ref{turing_rules}, with the tile types discussed in Section \ref{section_turing}. The rules enforce that each row in a valid tiling represents the state of the Turing machine tape at a moment in time, and that the vertical evolution in a valid tiling represents the time evolution of the Turing machine tape. (See~\cite{Gottesman:2009} for full details.)

It is convenient to think about the interior of the grid as having two ``layers''. The first layer represents the time-evolution of a binary counter Turing machine running from top to bottom. The second represents that of a non-deterministic Turing machine running from bottom to top. In both cases the main tiling rules are of the form specified in Table \ref{turing_rules}.

It is important to note that there are not actually two layers of tiles in the construction -- we are tiling one grid, using one set of tiles. But, apart from the boundary tiles, each tile in the construction is specified by a pair, denoting its ``layer 1'' type and its ``layer 2'' type: $T = T_1 \times T_2$, and the tiling constraints are given by $H = H_1 \times H_2$ and $V = V_1 \times V_2$. Thinking about these tile markings as representing ``two layers'' of tiles in the interior is convenient, because most of the tiling rules will constrain one of the layers independently of the other.

\begin{table}
\centering
\begin{minipage}{1\textwidth}
\centering
\begin{tabular}{c c c c}
 &  & & Tile on right
 \end{tabular}
\begin{tabular}{c c| c c c c c c}
 & & $[b]$ & $[b,q',r]$ & $[b,q',l]$ & $[b,q',R]$ & $[b,q',L]$ & $\tileR$ \\
 \hline
 &$[a]$  & Y & Y & N & Y & N & Y \\
Tile &$[a,q,r]$  & N & N & N & N & If $q=q'$ & N \\
on &$[a,q,l]$  & Y & N & N & N & N & Y \\
 left & $[a,q,R]$ & N & N & If $q=q'$ & N & N  & N\\
  &$[a,q,L]$  & Y & N & If N & N & N & Y \\
 &$\tileL$  & Y* & Y & If $q'=q_0$ & Y & N & Y
\end{tabular}
\end{minipage}
\centering
\begin{minipage}{1\textwidth}
\centering
\vspace{0.3in}
\begin{tabular}{c c c c}
 &  & & Tile on top
 \end{tabular}
\begin{tabular}{c c| c c c c c }
 & & $[b]$ & $[b,q',r]$ & $[b,q',l]$ & $[b,q',R]$ & $[b,q',L]$  \\
 \hline
 &$[a]$  & If $a=b$ & If $a=b$ & If $a=b$, $q' \neq q_0$ & N & N  \\
Tile &$[a,q,r]$  & N& N & N & If TM rule & If TM rule  \\
on &$[a,q,l]$  & N& N & N & If TM rule & If TM rule  \\
 bottom & $[a,q,R]$ &If $a=b$ & If $a=b$ & If $a=b$, $q' \neq q_0$ & N & N  \\
 &$[a,q,L]$  & If $a=b$ & If $a=b$ & If $a=b$, $q' \neq q_0$ & N & N
\end{tabular}
\end{minipage}
\caption{(Table~3 from~\cite{Gottesman:2009}) Tiling rules for simulating a Turing machine using a tiling, given our boundary conditions. Here `N' indicates a disallowed pair of neighbouring tiles, and `Y' denotes an allowed pair of neighbouring tiles. `If TM rule' means that the tiles can only be placed in that configuration if there is a Turing machine rule allowing that move. The `Y*' entry will be modified later to get the correct starting configuration.}
\label{turing_rules}
\end{table}

\paragraph{First layer of tiling:} For the first layer of the tiling, we will need to implement some additional rules in order to ensure that the binary counter Turing machine begins with the $[\#,q_0,l]$ tile in the top left interior tile, followed by $[\#]$ tiles. This is enforced by additional tiling rules, given in Table \ref{layer1}.

Combining these tiling rules with those for running a Turing machine gives the complete set of tiling rules for the first layer, where the particular Turing machine we are implementing on the first layer is a binary counter Turing machine, $M_{BC}$. It should also be noted that the binary counter Turing machine on the first layer ``runs'' from top to bottom, so the tiling rules for implementing a Turing machine from the previous section will be modified so that the Turing machine ``runs'' in the opposite direction, from bottom to top.

\begin{table}
\centering
 \begin{tabular}{c | c c c c c c }
 Boundary & & & Adjacent interior tile   & & \\
tile & $[a]$ & $[a,q,r]$ & $[a,q,l]$ & $[a,q,R]$ & $[a,q,L]$ \\
\hline
 $\tileS$ & Y & Y & Y & Y & Y \\
$\tileN$ & If $a = \#$ & N & If $a=\#$ and $q=q_0$ & N & N \\
 $\tileL$ & If $a \neq \#$ & Y & If $q=q_0$ & Y & N
\end{tabular}
\caption{(Table~4 from~\cite{Gottesman:2009}) Additional tiling rules for the first layer of tiling.}
\label{layer1}
\end{table}

\paragraph{Second layer of tiling:} For the second layer we would like to copy the output from the binary counter Turing machine to the bottom layer of the second layer, and we would like to enforce that the head of the non-deterministic Turing machine goes at the left-most point of the grid on the bottom row. We also require that in a valid tiling the Turing machine must be in its accepting state, $q_F$, at the top row of the grid. This is achieved with the additional tiling rules given in Table \ref{layer2}.

In order to force the head of the Turing machine to start at the left-most point of the grid, the tiling rules enforce that no symbol from the binary counter Turing machine alphabet can ever go to the right of a $\tileL$ tile. This means that after the Turing machine head has moved on from the leftmost point we will need to overwrite the symbol in the leftmost tile with a new symbol which does not appear in $\Sigma_{M_{BC}}$. This is accomplished by introducing an $a'$ symbol in the alphabet of the non-deterministic Turing machine for every $a \in \Sigma_{M_{BC}}$. Once the Turing machine head has moved on from the leftmost point, it overwrites the symbol on the leftmost tile to its primed version, and this is treated as an ordinary symbol for the remainder of the computation.

With this construction Gottesman and Irani demonstrate that there is a valid tiling of the grid if and only if the non-deterministic Turing machine accepts on input $x$ in $N$ steps, so every problem in NEXP can be reduced to tiling, and therefore tiling is NEXP-complete.

\begin{table}
\centering
 \begin{tabular}{c  | c c c c c }
 Boundary  & &  & Adjacent interior tile  & & \\
 tile   & $[a]$ & $[a,q,r]$ & $[a,q,l]$ & $[a,q,R]$ & $[a,q,L]$ \\
\hline
 $\tileS$ & If $a$ matches layer 1 & N & If $q=q_0$ and $a$ matches layer 1 & N & N \\
 $\tileN$ & Y & If $q=q_F$  & If $q=q_F$ & Y & Y \\
$\tileL$ & If $a \notin \sigma_{M_{BC}}$ & Y & If $q=q_0$ & Y & N
\end{tabular}
\caption{(Table~4 from~\cite{Gottesman:2009}) Additional tiling rules for the second layer of tiling.}
\label{layer2}
\end{table}

\subsection{NEXP-complete weighted tiling with open boundary conditions}\label{open_bc}
To have a NEXP-complete version of the tiling problem with open boundary conditions, one must consider a variant of the tiling problem where the constraints are weighted.

\begin{definition}[Weighted tiling (definition 4.3 from \cite{Gottesman:2009}]
\hfill\newline
\textbf{Problem parameters:} A set of tiles $T = \{t_1,...,t_m\}$. A set of horizontal weights $w_H:T\times T \rightarrow \mathbb{Z}$ such that if $t_i$ is placed to the left of $t_j$ , there is a contribution of $w_H(t_i,t_j)$ to the total cost of the tiling. A set of vertical weights $w_V: T \times T \rightarrow \mathbb{Z}$, such that if $t_i$ is placed below $t_j$, there is a contribution of $w_V(t_i,t_j)$ to the total cost of the tiling. A polynomial $p$. Boundary conditions (a tile to be placed at all four corners, open boundary conditions, or periodic boundary conditions). \\
\textbf{Problem input:} Integer $N$, specified in binary. \\
\textbf{Output:} Determine whether there is a tiling of an $N \times N$ grid such that the total cost is at most $p(N)$.
\end{definition}

In \cite{Gottesman:2009} Gottesman and Irani construct a NEXP-complete weighted tiling problem with open boundary conditions, where $p(N) = -4$. This is done using a three layer tiling construction.\footnote{As before, the `three-layer' terminology is just introduced to make the discussion clearer, in reality there is just one layer of tiles, with each tile specified by a triple $T_1 \times T_2 \times T_3$.} The tile types for layers 1 and 2 are unchanged from Section \ref{gottesman_details}, and there are five tile types which can be used in the third layer: $\tileSR$, $\tileSL$, $\tileNL$, $\tileNR$ and $\tileW$. The tiling weights for the third layer are given in Table \ref{open_bc_rules}. These tiling rules ensure that the minimum cost of tiling the third layer is $-4$. The optimal configuration is shown in Fig. \ref{optimal}.

\begin{table}
\centering
\begin{minipage}{1\textwidth}
\centering
\begin{tabular}{c c c c}
 &  & & Tile on right
 \end{tabular}
 \linebreak
\begin{tabular}{c c| c c c c c }
 & & $\tileNL$ & $\tileNR$ & $\tileSL$ & $\tileSR$ & $\tileW$ \\
 \hline
 &$\tileNL$  & $+4$& +4 & +4 & +4 & -1 \\
Tile &$\tileNR$  & +4 & +4 & +4 & +4 & +2  \\
on &$\tileSL$  & +4 & +4 & +4 & +4 & -1 \\
 left & $\tileSR$ & +4 & +4 & +4& +4 & +2 \\
  &$\tileW$  & +2 & -1 &  +2 & -1 & 0
\end{tabular}
\end{minipage}
\begin{minipage}{1\textwidth}
\centering
\begin{tabular}{c c c c}
 &  & & Tile on right
 \end{tabular}
 \linebreak
\begin{tabular}{c c| c c c c c }
 & & $\tileNL$ & $\tileNR$ & $\tileSL$ & $\tileSR$ & $\tileW$ \\
 \hline
 &$\tileNL$  & $+4$& +4 & +4 & +4 & +2 \\
Tile &$\tileNR$  & +4 & +4 & +4 & +4 & +2  \\
on &$\tileSL$  & +4 & +4 & +4 & +4 & 0 \\
 left & $\tileSR$ & +4 & +4 & +4& +4 & 0 \\
  &$\tileW$  & 0 & 0 &  +2 & +2 & 0
\end{tabular}
\end{minipage}
\caption{(Table~8 from~\cite{Gottesman:2009}) The tiling weights for third layer tiles in \textsc{Weighted Tiling} with open boundary conditions.}
\label{open_bc_rules}
\end{table}

\begin{figure}
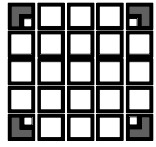

\begin{centering}
\begin{tabular}{c@{\extracolsep{0.1em}}c@{}c@{}c@{}c}
\tileNL & \tileW & \tileW & \tileW & \tileNR \\
\tileW & \tileW & \tileW & \tileW & \tileW \\
\tileW & \tileW & \tileW & \tileW & \tileW \\
\tileW & \tileW & \tileW & \tileW & \tileW \\
\tileSL & \tileW & \tileW & \tileW & \tileSR \\
\end{tabular}
\caption{(Figure~3 from~\cite{Gottesman:2009}) The optimal configuration of the third layer of tiles in \textsc{Weighted Tiling} with open boundary conditions.}
\label{optimal}
\end{centering}
\end{figure}

If we insist that any $\tileSR$, $\tileSL$, $\tileNL$ or $\tileNR$ tile in the third layer must correspond to a $\tileC$ in the main layer then the optimal overall tiling has layers 1 and 2 constrained in precisely the way required for the NEXP-complete construction outlined in Section \ref{gottesman_details}. If we assign all forbidden pairs of neighbouring tiles from the tiling rules in Section \ref{gottesman_details} a weight of +1, then the tiling problem from Section \ref{gottesman_details} reduces to \textsc{Weighted Tiling} with open boundary conditions.

\subsection{NEXP-complete weighted tiling with periodic boundary conditions}\label{periodic_bc}
In \cite{Gottesman:2009} Gottesman and Irani also construct a NEXP-complete \textsc{Weighted Tiling} with periodic boundary conditions, where $p(N)$ is again a constant, in this case $p(N)=+2$. As for open boundary conditions, the periodic boundary condition construction adds a third layer to the two-layer construction from Section \ref{gottesman_details}. The tile types which can be used in the third layer are $\tileC$, $\tileH$,$\tileV$, $\tileW$ and $\tileB$. The tiling rules for the third layer are given in Table \ref{periodic_bc_table}. The minimum cost of the third layer is +2 and one possible optimal configuration is shown in Fig. \ref{optimal_pc}.\footnote{The optimal configuration is now only uniquely defined up to translations, as the row of $\tileH$ tiles and column of $\tileV$ tiles could occur anywhere without changing the overall cost.}

The horizontal and vertical lines in Fig. \ref{optimal_pc} delineate the boundary of an $(N-1)\times (N-1)$ grid. Weighted tiling constraints can be implemented to ensure that in an optimal tiling of the overall grid, layers 1 and 2 are constrained to have $\tileC$ at the border of this grid. Therefore, if we assign all forbidden pairs of neighbouring tiles from the tiling rules in Section \ref{gottesman_details} a weight of +1, the tiling problem from Section \ref{gottesman_details} reduces to \textsc{Weighted Tiling} with periodic boundary conditions.

It should be noted that this construction requires that $N$ be odd. This restriction is unimportant both in the NEXP-completeness result and in our universality constructions, as we are free to choose an $N$ satisfying this constraint. (The restriction to odd $N$ can in fact be lifted using aperiodic tilings, see \cite{Cubitt:2015}.)

\begin{table}
\centering
\begin{minipage}{1\textwidth}
\centering
\begin{tabular}{c c c c}
 &  & & Tile on right
 \end{tabular}
 \linebreak
\begin{tabular}{c c| c c c c c }
 & & $\tileW$ & $\tileB$ & $\tileH$ & $\tileV$ & $\tileC$ \\
 \hline
 &$\tileW$  & $+3$& 0 & +3 & 0 & +3 \\
Tile &$\tileB$  & 0 & +3& +3 & 0 & +3  \\
on &$\tileH$  & +3 & +3 & 0 & +3 & +1 \\
 left & $\tileV$ & 0 & 0& +3 & +3 & +3 \\
  &$\tileC$  & +3 & +3 &  +1 & +3 & +3
\end{tabular}
\end{minipage}
\begin{minipage}{1\textwidth}
\centering
\begin{tabular}{c c c c}
 &  & & Tile on right
 \end{tabular}
 \linebreak
\begin{tabular}{c c| c c c c c }
 & & $\tileW$ & $\tileB$ & $\tileH$ & $\tileV$ & $\tileC$ \\
 \hline
 &$\tileW$  & $+3$& 0 & 0 & +3 & +3 \\
Tile &$\tileB$  & 0 & +3 & 0 & +3 & +3  \\
on &$\tileH$  & 0 & 0 & +3 & +3 & +3 \\
 left & $\tileV$ & +3 & +3 & +3 & 0 & 0 \\
  &$\tileC$  & +3 & +3 &  +3 & 0 & +3
\end{tabular}
\end{minipage}
\caption{(Table~9 from~\cite{Gottesman:2009}) The tiling weights for third layer tiles in \textsc{Weighted Tiling} with periodic boundary conditions.}
\label{periodic_bc_table}
\end{table}

\begin{figure}
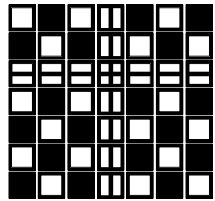

\begin{centering}
\begin{tabular}{c@{\extracolsep{0.1em}}c@{}c@{}c@{}c@{}c@{}c}
\tileW & \tileB & \tileW & \tileV & \tileB& \tileW & \tileB \\
\tileB & \tileW& \tileB & \tileV & \tileW& \tileB& \tileW \\
\tileH & \tileH & \tileH& \tileC & \tileH & \tileH & \tileH \\
\tileW & \tileB & \tileW & \tileV & \tileB& \tileW & \tileB \\
\tileB & \tileW& \tileB & \tileV & \tileW& \tileB& \tileW \\
\tileW & \tileB & \tileW & \tileV & \tileB& \tileW & \tileB \\
\tileB & \tileW& \tileB & \tileV & \tileW& \tileB& \tileW \\
\end{tabular}
\caption{(Modified version of figure~2a from~\cite{Gottesman:2009}) One possible optimal configuration of the third layer of tiles in \textsc{Weighted Tiling} with periodic boundary conditions.}
\label{optimal_pc}
\end{centering}
\end{figure}

\subsection{Using tiling to encode a deterministic Turing machine} \label{deterministic}
In \cite{Gottesman:2009} the tiling rules for the bottom boundary copy the entire bottom row of the first layer to the bottom row of the second layer, and used this as the input to a non-deterministic Turing machine. In their construction it was therefore only the problem instance $x$ which was given as input to the Turing machine.

For our purposes it is more useful to rephrase the construction in terms of a deterministic Turing machine, which takes some additional input $w$ as a witness. In order to do so we need to modify the rules for the bottom boundary in the NEXP-complete tiling construction, so that the part of the boundary that is \emph{not} specifying the problem instance $x$ is allowed to be any Turing machine alphabet tile, rather than being constrained to be the blank symbol as is the case in \cite{Gottesman:2009}. These unconstrained tiles then play the role of the witness -- the verifier Turing machine will accept if these tiles form a valid witness, and reject otherwise.
To achieve this we need to update the tiling rule between the bottom tile and the $[a]$ tile for layer~2. The updated tiling rules are summarised in Table \ref{deterministic-rules}.
\begin{table}
\centering
\begin{tabular}{c | c}
Boundary tile & Adjacent interior tile \\
location & $[a]$ \\
 \hline
 $\tileS$ & If layer 1 is $\#$ or if $a$ matches layer 1
\end{tabular}
\caption{Modified tiling rules to simulate a deterministic Turing machine.}
\label{deterministic-rules}
\end{table}

We now have that at the bottom boundary the problem instance $x$ is copied to the bottom row of the second layer, and the rest of the tiles in the bottom row of the second layer are allowed to be any of the Turing machine alphabet tiles. The rest of the tiling rules are unchanged from the NEXP-complete tiling construction, except that the Turing machine rules which are encoded in the tiling rules will be rules for a deterministic Turing machine, rather than for a non-deterministic Turing machine. This modification to the construction does not change the complexity of the problem. Whilst the alteration to a deterministic Turing machine does mean that each row is uniquely determined by the previous row (except in the case of the bottom boundary), the bottom interior row is not given as input, and the tiling problem remains NEXP-complete.

It should be noted that if the witness takes up the entire remaining length of the bottom row then the Turing machine will not have space to carry out any computation. It will take $N-3$ steps just to read in the input, and after $N-3$ steps the entire grid is tiled. However, if we want the tiling rules to remain translationally invariant we cannot restrict the length of the unconstrained ``witness'' tiles via tiling rules. This will not, however, be an issue in our construction. We will choose the encoding of the problem instance $x$ to include a specification the length of the witness, say $|w|$. The Turing machine will then only read in $|w|$ of the unconstrained tiles as part of its input, and will then begin running the witness verification computation without reading any further input. The remainder of the unconstrained tiles on the bottom row do not form part of the witness, and their states are essentially arbitrary. 

\section{Universality proofs}\label{universality}
\subsection{Tiling constraint Hamiltonians}\label{Tiling constraint Hamiltonians}

The tiling problem is translationally invariant as the tiling rules are the same everywhere on the grid. The Gottesman and Irani construction therefore suggests that translational invariance may not be a barrier to universality. Indeed, it's well known that the existence of a translationally invariant classical Hamiltonian whose GSE problem (Definition \ref{gse_def}) is NEXP-complete follows from \cite{Gottesman:2009}, an argument we encapsulate in the following lemma for later convenience:

\begin{lemma} \label{tiling_constraint}
Every weighted tiling problem can be represented by a classical, translationally invariant, nearest-neighbour spin Hamiltonian on a 2D square lattice, in such a way that tiling configurations correspond to spin configurations, and the energy of any spin configuration is identical to the weight of the corresponding tiling configuration.
\end{lemma}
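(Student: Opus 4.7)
The plan is to give an essentially direct translation of the weighted tiling problem into a spin Hamiltonian. I would first take the local spin state space to be $\mathcal{S} = T$, so each lattice site carries a spin whose state labels which tile is placed there. A spin configuration $\sigma$ on the 2D square lattice then corresponds bijectively to a tile placement on the grid, providing the required correspondence between tiling and spin configurations.

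Next I would define two fixed two-body interactions $h_{\mathrm{horiz}}, h_{\mathrm{vert}} : \mathcal{S} \times \mathcal{S} \to \mathbb{Z}$ by $h_{\mathrm{horiz}}(s, s') := w_H(s, s')$ (with $s$ the state of the left spin and $s'$ the state of the right spin), and similarly $h_{\mathrm{vert}}(s, s') := w_V(s, s')$. Since these depend only on the ordered pair of spin states, not on the sites, the Hamiltonian
\begin{equation}
H(\sigma) = \sum_{\langle i,j\rangle_{\mathrm{row}}} h_{\mathrm{horiz}}(\sigma_i, \sigma_j) + \sum_{\langle i,j\rangle_{\mathrm{col}}} h_{\mathrm{vert}}(\sigma_i, \sigma_j)
\end{equation}
is by construction translationally invariant and nearest-neighbour. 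Each term in the sum reproduces precisely the weight contribution of the corresponding adjacent pair in the tiling, so summing over all adjacent pairs gives that $H(\sigma)$ equals the total weight of the associated tiling configuration. The unweighted tiling problems appearing in Sections \ref{gottesman_details}--\ref{periodic_bc} can be cast as weighted tiling problems in the obvious way (assigning weight $+1$ to every disallowed pair and $0$ to every allowed pair), after which the same construction applies and valid tilings are precisely the zero-energy configurations of $H$.

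There is no substantive obstacle to overcome; the lemma is essentially a relabelling of tiles as spin states and of tiling weights as nearest-neighbour interaction matrix elements. The one subtlety worth flagging is that the identification is pointwise at the level of configurations, so $H(\sigma)$ agrees with the tiling weight for every $\sigma$ (not merely at the ground state). This is precisely the feature required by the Hamiltonian-simulation arguments in the remainder of the paper, where we will need to control energies across the entire low-energy spectrum rather than just the minimum.
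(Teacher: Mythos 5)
Your proposal is correct and follows essentially the same route as the paper: identify the spin state space with the tile set, define the nearest-neighbour horizontal and vertical interaction terms to be the tiling weight functions $w_H$ and $w_V$, and observe that the resulting Hamiltonian is translationally invariant and reproduces the total tiling weight pointwise on configurations. The extra remarks about recasting unweighted tiling as weighted tiling and about pointwise (not merely ground-state) agreement are consistent with the paper's usage of the lemma elsewhere.
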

\begin{proof}
Consider a tiling problem described by a set of tiles $T = \{t\}_{t=1}^{t=d}$, a set of horizontal weights $w_H: T \times T \rightarrow \mathbb{Z}$, and a set of vertical weights $w_V: T \times T \rightarrow \mathbb{Z}$, acting on an $N \times N$ square grid.

Consider a graph $G = (V,E)$ which is restricted to be a $N \times N$ square lattice. Let there be a $d$-dimensional Ising spin, $\sigma_i \in \{t\}_{t=1}^{t=d}$, associated with each vertex $i \in V$. Construct the Hamiltonian $H_T$:
\begin{equation}
H_T(\{\sigma_i\}) =  \sum_{(i,j)\in E_H} w_H(\sigma_i,\sigma_j) + \sum_{(i,j)\in E_V}w_v(\sigma_i,\sigma_j)
\end{equation}
where $E_H$, $E_V$ denote the sets of horizontal and vertical edges respectively.

We can see that there is a one-to-one mapping between the possible spin states of the Ising spins $\sigma_i$ and the tiles in the tiling problem, and that $H_T$ assigns an energy penalty to pairs of adjacent spins which is equal to the weight given to the corresponding pairs of neighbouring tiles.
The Hamiltonian $H_{T}$ is translationally invariant because its local interaction terms are the tiling rules, which are themselves translationally invariant.
\end{proof}
Throughout our proofs we will refer to Hamiltonians derived in this way as `tiling constraint Hamiltonians'. The following rephrases the result from \cite{Gottesman:2009} in terms of classical Hamiltonians:

\begin{corollary} \label{NEXP}
There exists a single, fixed, translationally invariant classical Hamiltonian, $H_{tiling}$, with open boundary conditions, whose GSE problem (taking $c = -4$) is NEXP-complete.\footnote{This corollary applies equally well to translationally invariant Hamiltonians with periodic boundary conditions, taking $c = +2$.}
\end{corollary}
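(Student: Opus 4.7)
The plan is to apply Lemma \ref{tiling_constraint} directly to the NEXP-complete \textsc{Weighted Tiling} problem with open boundary conditions constructed by Gottesman and Irani, which was reviewed in Section \ref{open_bc}. Since that weighted tiling problem has a threshold $p(N) = -4$, and the lemma guarantees that the weighted-tiling weights are reproduced exactly as the local energies of a translationally invariant, nearest-neighbour classical Hamiltonian on a 2D square lattice, this immediately yields a candidate Hamiltonian $H_{tiling}$ and a candidate threshold $c = -4$.

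First I would fix the tile set $T$, the horizontal weight function $w_H$ and the vertical weight function $w_V$ from Gottesman--Irani's open-boundary construction (the two-layer Turing-machine tiles with the third ``corner-picking'' layer, together with $+1$ penalties for each forbidden adjacency of the NEXP-complete tiling rules reviewed in Section \ref{gottesman_details}). Applying Lemma \ref{tiling_constraint} produces a nearest-neighbour, translationally invariant $H_{tiling}$ on an $N\times N$ lattice whose configuration energies coincide, value for value, with the total tiling costs. In particular, by construction the minimum energy of $H_{tiling}$ on an $N\times N$ lattice is at most $-4$ if and only if there is a valid tiling of cost at most $-4$.

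Next I would argue NEXP-hardness. Given any decision problem $L \in$ NEXP, the Gottesman--Irani reduction maps an instance $x$ to a lattice size $N(x)$ (encoded in binary into $N$) such that $x\in L$ iff $N(x)\times N(x)$ admits a weighted tiling of cost at most $-4$. By the one-to-one correspondence of Lemma \ref{tiling_constraint}, this is equivalent to the GSE instance $(H_{tiling}, N(x), c=-4)$ being a YES instance. The reduction from $x$ to the GSE instance is polynomial-time because it just outputs $N(x)$ in binary together with the fixed description of $H_{tiling}$ and $c$. For containment in NEXP, a non-deterministic verifier guesses a spin configuration of the $N\times N$ lattice (of size exponential in the $\log N$ input) and checks in exponential time that its total energy is at most $-4$ by summing the $O(N^2)$ nearest-neighbour terms.

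The substantive content has already been done by Gottesman and Irani, and the translation to a Hamiltonian is essentially bookkeeping supplied by Lemma \ref{tiling_constraint}; the only conceptual point that needs care is the input-encoding convention for the GSE problem. In Definition \ref{gse_def} the Hamiltonian $H_\alpha$ is part of the input, but here $H_{tiling}$ is a single fixed object and the only variable parameter is the lattice size $N$ presented in binary. I would therefore make explicit the convention that for this fixed-Hamiltonian GSE problem the input is $N$ in binary (matching exactly the input convention of \textsc{Tiling} in Definition \ref{tiling_def}), so that the exponential blow-up from unary to binary encoding of the system size is precisely what lifts NP-completeness at the level of tilings to NEXP-completeness at the level of the GSE problem. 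The footnote about periodic boundary conditions is handled by the same argument applied to the Section \ref{periodic_bc} construction with $c=+2$ instead, with no change to the proof structure.
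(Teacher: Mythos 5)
Your proposal is correct and is exactly the argument the paper intends: the corollary is stated immediately after Lemma \ref{tiling_constraint} as a rephrasing of the Gottesman--Irani NEXP-complete \textsc{Weighted Tiling} result (with open boundary conditions and $p(N)=-4$) in Hamiltonian language, and the paper leaves the routine translation — including NEXP-hardness via the binary encoding of $N$, and containment by non-deterministically guessing and verifying a spin configuration — implicit. Your explicit remark about the input convention for the fixed-Hamiltonian GSE problem (the Hamiltonian is not part of the input, only $N$ in binary) is the one point the paper glosses over and is worth stating.
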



\subsection{Two parameter model} \label{proof}
We can now prove our first result, that there exists a family of translationally invariant classical Hamiltonians with two parameters which can simulate all other classical Hamiltonians. We will prove the result for open boundary conditions, but the construction works equally well for periodic boundary conditions.\footnote{The proof for open boundary conditions requires we set the parameter $c = -4$, and use the third layer tiling rules from Section \ref{open_bc}. For periodic boundary conditions, one should instead use $c = +2$ and the third layer tiling rules from Section \ref{periodic_bc}.} We begin by defining the tiling model:

\begin{definition}[Tiling model]
  A `tiling model' is a family of Hamiltonians specified by a graph $G = (V,E)$ which is restricted to be a 2D square lattice; a tiling constraint Hamiltonian, $H_T$; an energy offset $c$; an energy penalty $\Delta\in\mathbb{R}$; and a ``flag'' energy $\alpha\in\mathbb{R}$.

  A discrete, classical spin $\sigma_i$ which can take values in some finite set $\mathcal{S}$ is associated with each vertex $i \in V$, and Hamiltonians in the model are given by:
\begin{equation}
H(\{\sigma_i\}) = \Delta (H_T(\{\sigma_i\}) - c) + \alpha \sum_{j \in V} f_{{\tilde{\gamma}}}(\sigma_j)
\end{equation}
where ${\tilde{\gamma}} \in \mathcal{S}$ and $f_{\tilde{\gamma}}: \mathcal{S} \rightarrow \mathbb{Z}$ is the function $f_{{\tilde{\gamma}}}({\tilde{\gamma}}) = +1$, and $\forall \sigma_i \neq {\tilde{\gamma}} \;\; f_{{\tilde{\gamma}}}(\sigma_i) = 0$.
\end{definition}


\begin{lemma}[Simulating two-level Ising spins using the tiling model] \label{two-level}
There exists a single fixed tiling constraint Hamiltonian, $H_{univ}$, with open boundary conditions, such that a tiling model with $H_T = H_{univ}$ and $c = -4$ can simulate all $k$-local classical Hamiltonians on two-level Ising spins.
\end{lemma}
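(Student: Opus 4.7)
The plan is to define $H_{univ}$ as the tiling constraint Hamiltonian (via Lemma \ref{tiling_constraint}) of a modification of the Gottesman--Irani weighted tiling construction from Section \ref{open_bc}, extended to perform the computation $\sigma \mapsto H'(\sigma)$ rather than merely accept/reject. In detail, I would keep the third layer and its weights unchanged (so that layers 1 and 2 are forced to respect the NEXP-style Turing-machine tiling whenever the third-layer contribution equals the minimum $c = -4$), replace the layer-2 non-deterministic verifier by a deterministic Turing machine $M$ following Section \ref{deterministic}, and introduce a distinguished ``flag'' symbol $\gamma$ in the layer-2 alphabet. The tile type $\tilde{\gamma}$ is then the unique full-stack tile whose layer-2 component is $\gamma$; by construction no input tape carries $\gamma$, so every $\tilde{\gamma}$ appearing in a valid tiling is a symbol written by $M$.

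A target $k$-local Hamiltonian $H' = \sum_{I=1}^m h_I$ on $n$ two-level Ising spins is encoded by choosing the lattice size $N$ so that its binary expansion contains $n$, $k$, $m$, the support and value table of every $h_I$, and the bit-length of the witness, all packaged using a standard prefix-free encoding. The layer-1 binary counter then makes this description available to $M$ at the left edge of the grid (as in Section \ref{gottesman_details}), while the unconstrained ``witness'' tiles on the bottom boundary of layer 2 encode the physical spin configuration $\sigma \in \{0,1\}^n$, with the subset $P_i$ in Definition \ref{simulation_definition} taken to be a single witness tile per spin (two of the layer-2 alphabet symbols being designated as $0$ and $1$). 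Because $H'$ is $k$-local, $H'(\sigma)$ is computable in time polynomial in $n,m,2^k$; after computing it $M$ writes $\lfloor (H'(\sigma) - \min_{\sigma'} H'(\sigma'))/\delta \rfloor$ copies of $\gamma$ on an otherwise-unused region of the tape. The grid has side length $N$, which is exponential in $\log N$, so there is ample room for both the computation and the polynomially many $\gamma$-symbols.

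With this setup, any tiling that satisfies all constraints has third-layer contribution exactly $c$, Turing-machine-layer contribution $0$, and flag contribution $\alpha \cdot \#\{\, j : \sigma_j = \tilde{\gamma}\,\}$, which equals $\alpha \lfloor (H'(\sigma) - \min_{\sigma'} H'(\sigma')) /\delta\rfloor$ by construction. Setting $\alpha = \delta$ reproduces $H'$ to additive error $\delta$ up to an overall shift absorbed into $\mu$, while choosing $\Delta$ larger than $\max_\sigma H'(\sigma)$ plus any additive offset pushes every invalid tiling (which violates at least one tiling constraint and thus pays an extra $\Delta$) above the cutoff. Because $M$ and the binary counter are both deterministic, once $\sigma$ is fixed every low-energy interior tile is uniquely determined, giving the one-to-one correspondence required by parts \ref{part:eigenvalues} and \ref{part:eigenstates} of Definition \ref{simulation_definition}; part \ref{part:partition_function} follows from the standard estimate that the sum of $e^{-\beta E_\sigma}$ over invalid tilings is bounded by a polynomial in the number of tile types times $e^{-\beta\Delta}$.

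The main obstacle I expect is twofold. First, the interplay between the discrete unary output of $M$ and the target precision $\delta$: $M$ must operate at a precision that depends on $\delta$, so the description of $H'$ encoded in $N$ must include that precision, and one must verify that rounding errors add coherently and stay within the $\delta$ window of Definition \ref{simulation_definition}. Second, one must control the multiplicity of valid tilings per physical configuration: the ``extra'' unconstrained bottom-row tiles (Section \ref{deterministic}) and the uniquely determined interior together contribute a factor $\mu$ that must be independent of $\sigma$, which holds provided the witness-length field in $N$ fixes exactly how many bottom-row tiles are read as input and the remainder is free in a $\sigma$-independent way. Once these bookkeeping points are handled, the lemma reduces to direct calculation within the NEXP-complete framework already established in Section \ref{gottesman_details}.
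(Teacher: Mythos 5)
Your overall architecture matches the paper's: encode $H'$ in the binary expansion of $N$, run the layer-1 binary counter to recover that description, feed it with the unconstrained bottom-row ``witness'' spins (the physical spins) into a deterministic layer-2 Turing machine, have that machine emit a unary count of flag symbols $\gamma$ proportional to $H'(\sigma)$, and let the $\alpha$ term in the tiling model assign energy to the flag-carrying spins. However, there is a genuine gap in the step where you claim the flag contribution ``equals $\alpha \lfloor (H'(\sigma) - \min_{\sigma'} H'(\sigma'))/\delta \rfloor$ by construction.''

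The problem is persistence. In the Gottesman--Irani tiling encoding, each row represents one tape configuration, and once $M$ writes a $\gamma$ to a cell that symbol remains on the tape (hence appears as a $\tilde{\gamma}$ tile) in \emph{every subsequent row} until explicitly overwritten. After $M$ halts, the final tape configuration is simply copied upward until the $N\times N$ grid is full, so a tape that ends with $M$ copies of $\gamma$ contributes on the order of $M\cdot N$ tiles carrying $\gamma$ in their layer-2 component, not $M$. The count is therefore wildly off, and worse, it is proportional to $N$ rather than to the target energy. The paper resolves this by introducing an entirely new tile type $\langle\tilde{\gamma}\rangle$ together with additional rules (Tables~\ref{additional-rules} and \ref{top-two}) ensuring that $\langle\tilde{\gamma}\rangle$ can appear only in the top interior row, directly below a $\tileN$ tile, and exactly where the row beneath carries a $\gamma$; the top boundary is also modified to forbid a plain $\gamma$-tile below it, forcing each output $\gamma$ to terminate in exactly one $\langle\tilde{\gamma}\rangle$ in the final row. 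It is then $\langle\tilde{\gamma}\rangle$, not the layer-2 $\gamma$ symbol, that receives the $\alpha$ energy. Without this (or an equivalent mechanism such as having the TM erase its output on a schedule so that each $\gamma$ survives for exactly one row --- which would be fiddly to make work with fixed translationally invariant rules), the construction does not reproduce the target energies.

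Two smaller points. First, the offset $\min_{\sigma'} H'(\sigma')$ cannot be ``absorbed into $\mu$'': $\mu$ is a multiplicative constant on the partition function in Definition~\ref{simulation_definition}, not an additive energy shift. What you actually need is the paper's WLOG assumption that $H'$ has nonnegative spectrum (achievable by shifting $H'$ itself, since adding a constant to the target Hamiltonian is innocuous). Second, your choice $\alpha=\delta$ introduces rounding that you then have to track; the paper instead takes $\alpha$ to be the greatest common measure of $\bigcup_I \mathrm{spec}(h_I)$ after rounding each $h_I$ to rationals at precision $\delta$, so that all energy contributions become exact integer multiples of $\alpha$ and no floor function is needed. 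Either can be made to work, but the paper's choice eliminates the coherent-error bookkeeping you flag as a worry.
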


\begin{proof}
\paragraph{Proof overview:} In order to prove the lemma, we will explicitly construct $H_{univ}$.
Our construction takes inspiration from the NEXP-complete weighted tiling problem with periodic boundary conditions constructed in \cite{Gottesman:2009}. As in the Gottesman and Irani construction, the problem instance (in this case the Hamiltonian to be simulated, $H'$) is encoded in the binary expansion of $N$, the size of the grid to be tiled. The tiling is carried out in three layers:\footnote{As in \cite{Gottesman:2009} this `three layer' language is used to make explanations of the tiling rules simpler, but in reality we are just tiling one plane, using one set of tiles and tiling rules.} the first layer implements a binary counter Turing machine, so that the bottom row of the first layer contains $N$ in binary -- a description of $H'$.

This description of $H'$ is copied to the bottom layer of the second row (taking up approximately $\log(N)$ tiles), and the rest of the tiles in the bottom row of the second layer are left unconstrained. The first $n$ of the unconstrained tiles act as the `physical' spins (see Definition \ref{simulation_definition}). The tiling rules for the second layer encode a Turing machine, which reads in the description of $H'$ and the configuration of the physical spins, and computes the energy that should be assigned to this configuration in multiples of $\alpha$. The tiling rules then force the correct number of spins to be in the ${\tilde{\gamma}}$ state, so that the system has the desired energy. The third layer is used to implement boundary conditions on the first two layers.

Any non-optimal tiling will incur an energy penalty of $\Delta$, so provided $\Delta > E_{\max}(H')$ any spin configuration which corresponds to an non-optimal tiling will not have energy below the energy cut-off. As required by Definition \ref{simulation_definition}, this energy cut-off can be made arbitrarily large by increasing the $\Delta$ parameter in the construction. (Similarly to the non-translationally-invariant models constructed in \cite{Cubitt:2016}.)

\paragraph{Encoding $H'$:}
The Hamiltonian to be simulated, $H'$, is encoded in the binary expansion of $N$, the size of the grid to be tiled. Let $H' = \sum_{I=1}^mh_I$ be a $k$-local Hamiltonian containing $m$ terms, acting on $n$ two-dimensional spins. $H'$ is then fully described by $n$ and the energy levels associated with each of the $2^k$ possible spin configurations for each of the $m$ $k$-local terms.

Since we only need to simulate energies up to some fixed precision $\delta$, we can equivalently assume without loss of generality that each $h_I$ is a function $h_I:\mathcal{S}^{\times k} \rightarrow \mathbb{Q}$.
The set $E = \cup_{I=1}^m \text{spec}(h_I)$ (where  $\text{spec}(h_I)$ denotes the spectrum of $h_I$) is then a set of rational numbers, and we can define $\alpha$ to be the greatest common measure of $E$.
A complete description of $H'$ then consists of $n$ and, for each of the $m$ $k$-local terms, $2^k$ parameters giving the energy contribution (in integer multiples of $\alpha$) of that local term for each of the $2^k$ possible configurations of the $k$ spins it acts on.

It is worth considering explicitly how the Hamiltonian to be simulated, $H'$, will be encoded in binary. The first thing to note is that since we are in binary we have only two symbols available to us. As the length of the description depends on the Hamiltonian, we need to use a self-delimiting code. (An alternative would be to encode in ternary, and reserve one symbol exclusively as a delimeter. But using a self-delimiting code and sticking to binary makes the tiling construction slightly simpler.)

We will use a simple self-delimiting code known as the Elias-$\gamma$ code \cite{Fenwick:2003}. The length of an integer $z$ encoded in the Elias-$\gamma$ code scales as $O(\log(z))$, so there is only a constant overhead compared with the binary representation of $z$. If we denote the binary representation of $z$ by $B(z)$, and its length by $|B(z)|$, then the Elias-$\gamma$ code for $z$ is given by $|B(z)|-1$ zeros that indicate how long the input will be, followed by $B(z)$ itself. All binary numbers (with the exception of zero) begin with a one, so $B(z)$ starts with a one, and this delimits $B(z)$ from the $|B(z)|-1$ zeros. The Elias-$\gamma$ code does not code zero or negative integers. But we do not need negative integers for our construction, and we can easily include zero in the code by adding one to every number before coding, and subtracting one after decoding. We will refer to this as the Elias-$\gamma'$ coding.

Having established how to encode individual parameters in binary, we can now consider how to encode the full Hamiltonian. Every value needed to specify the Hamiltonian will be represented in Elias-$\gamma'$ coding. For the purpose of the encoding we label the $n$ spins in the original system by integers $i = 1:n$, corresponding to the order in which these are represented in the physical spins that act as input to the Turing machine. The encoding of the Hamiltonian begins with $n$, followed by $k$. Each of the $m$ $k$-local terms in $H'$ is specified by giving the label of each spin involved in that interaction (a total of $k$ integers), followed by the energy values for each of the $2^k$ possible spin configurations of the $k$ spins acted on by that local term, ordered by the index of the configuration considered as a binary number. These energy values are specified by the closest integer multiple of $\alpha$ to the desired value. ($l$-local terms for $l<k$ can be specified by arbitrarily picking $k-l$ spins to pad the number of spins to $k$, but specifying identical energy values for configurations differing only on those extra $l-k$ spins).

After the last of the $k$-local terms has been described in this way, there will be a single~1, delimiting the end of the description of $H'$. This will be unambiguous, as it is the only time that an integer in Elias-$\gamma'$ coding will be followed by a one, rather than by a string of zeros.\footnote{The 1 at the end also ensures that $N$ is an odd number. This is not important for the open boundary conditions, but would be necessary if we were using periodic boundary conditions (see Section \ref{periodic_bc}). (Note that for using periodic boundary conditions, the size of the grid to be tiled is actually $N+4$.)} If we let $\gamma'(z)$ denote the integer $z$ represented in Elias-$\gamma'$ code then putting all of this together, we can represent the Hamiltonian as follows:
\begin{equation}
H' := \gamma'(n)\cdot \gamma'(k)\cdot \left[ \gamma'(i)^{\cdot k} \cdot \left( \gamma'(\lambda_j) \right)^{\cdot 2^k} \right]^{\cdot m}\cdot 1
\end{equation}
where $\cdot$ denotes concatenation of bit strings, the $i$ denotes the index of the spins in each $k$-local term, and the $\lambda_j$ indicate the energy levels in terms of multiples of $\alpha$ of each $k$-local term in the Hamiltonian.

This encoding can unambiguously specify any $k$-local Hamiltonian in (the binary representation of) an integer $N$. It will take $O(\log(n))$ bits to specify $n$ in Elias-$\gamma'$ code, $O(\log(k))$ bits to specify $k$ in Elias-$\gamma'$ code, and $O(m2^k)$ bits to specify the energy levels of each $k$-local term. The total number of bits required therefore scales as $O(m2^k) = O(n^k)$, where we have used the fact that without loss of generality, $m \leq {n\choose k} = O(n^k)$ for $k$-local Hamiltonians. The length of $N$ in binary is approximately $\log(N)$, so we have that $\log(N) = O(n^k)$, and hence $N = O(2^{n^k})$.
$N$ is clearly efficiently computable from any other reasonable description of $H'$. 

\paragraph{Tiling rules:}\footnote{The tiling rules in this section are stated in terms of forbidden and allowed pairs of neighbouring tiles. As in Section \ref{open_bc}, all forbidden pairs of neighbouring tiles have weight +1, and all allowed pairs have weight 0.}
The tiling rules for the third layer of tiles and the border of layers one and two are unchanged from the construction in \cite{Gottesman:2009}, outlined in Sections \ref{open_bc} and \ref{gottesman_details} respectively.
%
The first layer of the tiling will implement a binary counter Turing machine from top to bottom, as before. So that the bottom row of the first layer in any optimal tiling contains $N$ in binary, with the remaining tiles blank. 
For simplicity we will assume that the tape alphabet for the binary counter Turing machine is given by $\Sigma_{M_{BC}} = \{0,1,\#\}$, where $\#$ denotes the blank symbol. The tiling rules needed to implement a binary counter Turing machine on the first layer of the tiling are unchanged from \cite{Gottesman:2009}.

\paragraph{Bottom boundary:}
The tiling rules for the second layer will encode a Turing machine which reads in the description of the Hamiltonian to be simulated in binary, and the state of the physical spins, and calculates the energy of the system being simulated. For concreteness we will assume that the tiling rules encode a Turing machine with six alphabet symbols: $\Sigma_{M_U} = \{0,1,0',1',\gamma, \# \}$. The $\#$ denotes a blank symbol, the 0,$0'$ and 1,$1'$ are used during the computation, and the $\gamma$ is only used in the output of a computation. The input to the Turing machine represents the state of the physical spins. As the spins being simulated are two-level we need to update the tiling rules so that only $0$ and $1$ symbols can appear as input. The updated rules are provided in Table \ref{lower-boundary}.

\begin{table}
\centering
\begin{tabular}{c | c}
Boundary tile & Adjacent interior tile \\
location & $[a]$ \\
 \hline
 $\tileS$ & If layer 1 is $\#$ and $a = 0,1$ OR if layer 1 is not $\#$ and $a$ matches layer 1
\end{tabular}
\caption{Modified tiling rules for the bottom boundary in the universality construction.}
\label{lower-boundary}
\end{table}
These rules are combined with the rules from \cite{Gottesman:2009}, which state that in layer 2 an $[a]$ tile can only go next to the left boundary if $a\notin \Sigma_{M_{BC}}$. This ensures that the left-most position contains the head of the Turing machine, and that the output of the first layer (i.e. $N$ in binary) is copied to the first approximately $\log(N)$ tiles, while the remaining tiles are constrained to be the tiles representing the Turing alphabet states $0$ or $1$.  After the Turing machine head has moved on from the left-most position we need to have the left-most tile in an alphabet state which is not in $\Sigma_{M_{BC}}$, so the alphabet symbol from the left-most tile is updated to its primed version, which obeys the tiling rule, and is treated as a $0,1$ for the rest of the computation.

\paragraph{Computation:}
With these tiling rules for the bottom boundary, the only optimal tilings of the bottom row will have the Turing machine head in the left-most position, with $N$ in binary copied to the first approximately $\log(N)$ tiles, and the remaining tiles constrained to be alphabet tiles containing either the symbols $0$ or $1$.

The Turing machine reads in the program, which contains a description of the Hamiltonian to be simulated, including the number of spins the Hamiltonian is acting on, $n$. It will then read in a further $n$ input bits, which determine the state of the $n$ `physical' spins in the simulation. The Turing machine will then calculate the energy (in terms of multiples of $\alpha$) arising from each of the $k$-body terms in the Hamiltonian given the state of the physical spins, and will output $0$ if a $k$-body term contributes no energy with the given spin configuration, or $a$ copies of the symbol $\gamma$ if a $k$-body term contributes energy $a \alpha$ with the given spin configuration. After each of the $m$ $k$-local terms has been calculated, the final state of the Turing machine tape will be a string of $0$s and $\gamma$s containing $M$ $\gamma$s (where the total energy of the particular spin configuration is $M\alpha$), with a number of $0$ and $1$s remaining on parts of the tape that have not been used for output.

We have seen that $N = O(2^{n^k})$. It will take the Turing machine $O(\log(N) + n) = O(n^k)$ steps to read in the input. The remainder of the calculation simply involves determining which of the $2^k$ possible configurations the relevant $k$ spins are in, for each of the $k$-local terms, and outputting the corresponding energy. For each $k$-local term this can clearly be done in time $O(\text{poly}(2^k))$, giving a time for the total computation of $O(\text{poly}(n^k))$. The system has $N = O(2^{n^k})$ rows of computation available to it, so will finish the computation in the space available in the tiling grid.

At the point the Turing machine has finished its computation, and determined the energy contribution from each $k$-body term in the Hamiltonian, the Turing machine tape will contain a string of $0$s and $\gamma$s, followed by some 0s and 1s left over from the initial input, and from the part of the tape that didn't form part of the initial input.

Note that there will always be enough room to write all the $\gamma$'s in a single $N$-length row of tiles. The description of $H$, which is the binary expansion of $N$, lists the energy levels of each of the $k$-local terms in terms of multiples of $\alpha$, written in binary, which means that $N$ always grows faster than the required number of $\gamma$'s.

\paragraph{Top boundary:}
The tiling rules introduced so far ensure that once the computation finishes the row which encodes the final line in the computation will just be copied until the grid is full. However, this means that there will be more than $M$ of the $\gamma$ symbols in the full grid, so if we give energy to the $\gamma$ symbols we will not end up with the correct energy. To circumvent this issue, we introduce a new type of tile, which we will denote $\langle\tilde{\gamma}\rangle$, and we will introduce tiling rules which ensure that the $\langle\tilde{\gamma}\rangle$ tiles only appear in the top interior row of the tiling, so we can give energy to just these tiles without introducing terms in the Hamiltonian which break translational invariance.

The additional tiling rules associated with the $\langle\tilde{\gamma}\rangle$ tile are given in Table \ref{additional-rules}.
\begin{table}
\centering
\begin{tabular}{c | c c c c }
 &   & Position of $\langle\tilde{\gamma}\rangle$ & &\\
Adjacent tile type & Below & Above & Left & Right \\
 \hline
$\tileN$ & Y & N & N & N \\
$[a]$ & N & If $a = \gamma$ & Y & Y \\
$[a,q,r]$ & N & If $a = \gamma$ & Y & Y \\
$[a,q,l]$ & N & If $a = \gamma$ & Y & Y \\
$[a,q,R]$ & N & If $a = \gamma$& Y & Y \\
$[a,q,L]$ & N & If $a = \gamma$ & Y & Y \\
$\tileL$ & N & N & N & Y \\
$\tileR$ & N & N & Y & N \\
$\tileS$ & N & N & N & N \\
$\langle\tilde{\gamma}\rangle$ & N & N & Y & Y
\end{tabular}
\caption{Additional tiling rules for the $\langle \tilde{\gamma} \rangle$ tile.}
\label{additional-rules}
\end{table}
In an optimal tiling the $\langle\tilde{\gamma}\rangle$ tile can only appear in the top interior row as the only tile that can appear directly above a $\langle\tilde{\gamma}\rangle$ tile is a top boundary tile. In order to ensure that every $\gamma$ symbol that is output in the computation is copied to a $\langle\tilde{\gamma}\rangle$ tile in the top interior row we need to update the rules for the top boundary. The updated rules are provided in Table \ref{top-two}.

\begin{table}
\centering
Adjacent interior tile
\begin{tabular}{c | c c c c c c}
Boundary tile & $[a]$ & $[a,q,r]$ & $[a,q,l]$  & $[a,q,R]$ & $[a,q,L]$ & $\langle\tilde{\gamma}\rangle$  \\
 \hline
 $\tileN$ & If $a \neq \gamma$ & $a \neq \gamma$ & $a \neq \gamma$ & $a \neq \gamma$ & $a \neq \gamma$ & Y
 \end{tabular}
 \caption{Modified tiling rules for the top boundary in the two-parameter universality construction.}
 \label{top-two}
\end{table}
These rules ensure that in every optimal tiling 
there will be exactly $M$ of the $\langle\tilde{\gamma}\rangle$ tiles in the top interior row, and none appearing  elsewhere in the tiling. If we associate the $\langle\tilde{\gamma}\rangle$ tile with the ${\tilde{\gamma}}$ state of the spins, then the energy of this system is precisely $M\alpha$, the energy of the physical spins being simulated.

\paragraph{Partition function:} We have now established that for all optimal tilings, the energy of the simulator system will be equal to the energy of the target system when the spin configuration of the target system is equal to the spin configuration of the physical spins, and for all non-optimal tilings the energy of the simulator system will be greater than $\Delta$, so $H$ meets the first two requirements to be considered a simulation of $H'$. The final requirement to consider is that $H$ reproduces the partition function of $H'$ to within arbitrarily small relative error.

For this, it is not enough that $H$ has the same energy levels as $H'$. It must also introduce the same additional degeneracy to each of the energy levels of the original Hamiltonian (a proof of this can be found below). In our construction, the spins that encode $H'$ are uniquely determined for all optimal tilings, so introduce no additional degeneracy to any energy levels below $\Delta$.

As we are encoding a deterministic Turing machine, the tiles in each row in an optimal tiling are uniquely determined by the tiles in the preceeding row, so likewise in optimal tilings the spins in the higher rows do not introduce any additional degeneracy. The only spins we need to consider are therefore the spins that correspond to the tiles in the bottom interior row that are constrained to be the alphabet tiles $0$ or $1$, but that are not part of the set of physical spins. The states of these tiles are entirely independent of the states of the physical tiles. As such, they will introduce the same degeneracy to all possible configurations of physical spins, and hence to all the energy levels. The partition function of $H$ therefore reproduces the partition function of $H'$ to within arbitrarily small relative error up to physically irrelevant rescaling.

This completes the proof of Lemma \ref{two-level}.
\end{proof}

\begin{lemma}
Let $H$ be a Hamiltonian on $N$ $d$-level Ising spins that approximates all energy levels of $H'$ to within error $\delta$, up to an energy cut-off $\Delta$. Then
\begin{equation}
\left| \frac{Z_{H}(\beta) - \mu Z_{H'}(\beta)}{\mu Z_{H'}(\beta)} \right| \leq \left(e^{\beta \delta} -1 \right) + O\left(\frac{e^{-\Delta}}{\mu Z_{H'}(\beta)}\right)
\end{equation}
(cf.\ Definition \ref{simulation_definition}) if and only if $H$ introduces the same additional degeneracy $\mu$ to each of the energy levels of $H'$.
\end{lemma}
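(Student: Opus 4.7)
The plan is to split $Z_H(\beta)$ into contributions from configurations with $H(\sigma)<\Delta$ and those with $H(\sigma)\ge\Delta$; the second piece is bounded by the total number of spin configurations times $e^{-\beta\Delta}$, which supplies the $O(e^{-\Delta}/(\mu Z_{H'}))$ tail after dividing by $\mu Z_{H'}$. For the low-energy piece I would use the hypothesis that each energy level $E'_i$ of $H'$ is approximated by some $\mu_i$ configurations of $H$ with energies $E'_i+\epsilon_{i,j}$ satisfying $|\epsilon_{i,j}|\le\delta$, and write
\[
Z_H^{<\Delta}(\beta) \;=\; \sum_i \sum_{j=1}^{\mu_i} e^{-\beta(E'_i + \epsilon_{i,j})}.
\]

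For the forward direction, assume $\mu_i=\mu$ for every $i$. Bounding each factor $e^{-\beta\epsilon_{i,j}}$ between $e^{-\beta\delta}$ and $e^{\beta\delta}$ gives the sandwich $e^{-\beta\delta}\mu Z_{H'}(\beta)\le Z_H^{<\Delta}(\beta)\le e^{\beta\delta}\mu Z_{H'}(\beta)$, so $|Z_H^{<\Delta}(\beta) - \mu Z_{H'}(\beta)| \le (e^{\beta\delta}-1)\mu Z_{H'}(\beta)$. Adding the high-energy tail and dividing through by $\mu Z_{H'}(\beta)$ yields the claimed inequality.

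For the converse I would argue by taking $\delta\to 0$ and $\Delta\to\infty$, which is the regime relevant to Definition \ref{simulation_definition}: the right-hand side of the bound vanishes, forcing $Z_H(\beta)=\mu Z_{H'}(\beta)$ for all $\beta>0$. Writing both sides as sums $\sum_E m(E) e^{-\beta E}$ over the respective spectra and invoking the linear independence of the exponentials $\{e^{-\beta E}\}_E$ as functions of $\beta$ (equivalently, viewing them as distinct monomials in $e^{-\beta}$), the multiplicities must match term by term, so $\mu_i=\mu$ for every energy level $E'_i$ of $H'$.

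The main obstacle is making the converse precise: a single fixed-parameter inequality cannot by itself pin down $\mu$ exactly, so the correct interpretation is that the stated bound, regarded as a family parameterised by tunable $\delta$ and $\Delta$, is achievable if and only if the degeneracies are uniform. Non-uniform degeneracies leave an irreducible relative-error term of size at least $\max_i|\mu_i-\mu|/\mu$ which cannot be absorbed into either $(e^{\beta\delta}-1)$ or $O(e^{-\Delta}/(\mu Z_{H'}))$ as $\delta\to 0$ and $\Delta\to\infty$, and this is exactly the obstruction that the limiting argument above converts into a contradiction.
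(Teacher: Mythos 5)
Your forward direction is essentially identical to the paper's: split $Z_H$ at the cut-off $\Delta$, bound the tail by $d^N e^{-\beta\Delta}$, and sandwich the low-energy piece between $e^{-\beta\delta}\mu Z_{H'}$ and $e^{\beta\delta}\mu Z_{H'}$ using the assumed uniform degeneracy. For the converse you and the paper both reduce to linear independence of the exponentials $e^{-\beta E}$ as functions of $\beta$, but you get there by a noticeably different route. The paper holds $H$ fixed, writes the low-energy relative error as $\bigl|\sum_{\sigma'} e^{-\beta(H'(\sigma')+\delta_{\sigma'})}(m_{\sigma'}-\mu)\bigr|/\mu Z_{H'}(\beta)$, splits into two cases by the sign of the inner sum, and pushes $\beta\delta\to 0$ inside each case to extract the exact identity $\sum_{\sigma'} e^{-\beta(H'(\sigma')+\delta_{\sigma'})}(m_{\sigma'}-\mu)=0$, at which point linear independence finishes. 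You instead let $\delta\to 0$ and $\Delta\to\infty$ to conclude $Z_H=\mu Z_{H'}$ outright and then match coefficients. Your version is cleaner to state, but it quietly changes the object of study: once $\delta$ and $\Delta$ are tuned you are really ranging over a \emph{family} of simulator Hamiltonians $H_{\delta,\Delta}$, so ``$Z_H$'' is not a single fixed function. You flag this yourself in the final paragraph, and the resolution (which your sketch gestures at but should state) is that the multiplicities $\mu_i$ are integers, so if some $\mu_i\neq\mu$ the quantity $\max_i|\mu_i-\mu|/\mu$ is bounded away from zero uniformly in $(\delta,\Delta)$; linear independence then shows the relative error cannot fall below this floor, contradicting the bound for small enough $\delta$ and large enough $\Delta$. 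That integrality step is what converts the heuristic ``irreducible error term'' into an actual contradiction, and it is the analogue of the paper's step where they use $m_{\sigma'}-\mu\in\mathbb{Z}$ inside Case~2. With that made explicit your argument is sound, and arguably more transparent than the paper's case split, at the cost of needing the family-of-$H$'s reading of the lemma, a reading the paper itself implicitly relies on when it says the inequality ``must hold for all $\beta>0$ and all $\delta\in(0,1)$.''
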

\begin{proof}
Denote the spin configurations of $H$ and $H'$ by $\sigma$  and $\sigma'$, respectively. Assume that $H$ simulates $H'$, and the degeneracy of each energy level of the $H$ system is multiplied by a factor of $\mu$ compared with the corresponding energy level of the $H'$ system. Then the relative error in the partition function is given by:
\begin{equation}
\begin{split}
\left| \frac{Z_{H}(\beta) - \mu Z_{H'}(\beta)}{\mu Z_{H'}(\beta)} \right| & = \left| \frac{ \sum_{\sigma}e^{-\beta H(\sigma)} - \mu \sum_{\sigma'}e^{-\beta H'(\sigma')}}{ \mu \sum_{\sigma'}e^{-\beta H'(\sigma')}} \right| \\
& = \left| \frac{  \sum_{\sigma:H(\sigma) < \Delta} e^{-\beta H(\sigma)} + \sum_{\sigma:H(\sigma) \geq \Delta} e^{-\beta H(\sigma)} - \mu \sum_{\sigma'}e^{-\beta H'(\sigma')}}{ \mu \sum_{\sigma'}e^{-\beta H'(\sigma')}} \right| \\
&  \leq \left| \frac{ \mu \sum_{\sigma'} e^{-\beta \left(H'(\sigma') + \delta_{\sigma}\right)} - \mu \sum_{\sigma'}e^{-\beta H'(\sigma')}}{ \mu \sum_{\sigma'}e^{-\beta H'(\sigma')}} + \frac{\sum_{\sigma:H(\sigma) \geq \Delta}e^{-\beta \Delta}}{ \mu Z_{H'}(\beta)} \right|  \\
&  \leq \left| \frac{ e^{\beta \delta} \mu  \sum_{\sigma'} e^{-\beta H'(\sigma')} - \mu \sum_{\sigma'}e^{-\beta H'(\sigma')}}{ \mu \sum_{\sigma'}e^{-\beta H'(\sigma')}} + \frac{ d^{N} e^{-\beta \Delta}}{\mu Z_{H'}(\beta)} \right| \\
& =  \left(e^{\beta \delta} -1 \right)  + O\left(\frac{e^{-\Delta}}{\mu Z_{H'}(\beta)}\right)
\end{split}
\end{equation}
In the second step we have used the fact that below $\Delta$ the energy levels of $H$ are within $\delta$ of the energy levels of $H'$ but repeated $\mu$ times, in order to replace the sum over $\sigma$ configurations with a sum over $\sigma'$ configurations. The $\delta_{\sigma'}$ can take on values in the interval $[-\delta,+\delta]$. In the penultimate step we have assumed the worst case (all $\delta_{\sigma'} = -\delta$) to upper-bound the error. \par

In order to see the only if direction, consider the case where $H$ approximates all energy levels of $H'$ to within $\delta$ below energy cut-off $\Delta$, but where it does not introduce the same degeneracy to each energy level. Denote the degeneracy of spin configuration $\sigma'$ by $m_{\sigma'}$. Then the relative error in the partition function is given by:
\begin{equation}
\begin{split}
\left| \frac{Z_{H}(\beta) - \mu Z_{H'}(\beta)}{\mu Z_{H'}(\beta)} \right| & = \left| \frac{\sum_{\sigma}e^{-\beta H(\sigma)} - \mu \sum_{\sigma'}e^{-\beta H'(\sigma')}}{ \mu \sum_{\sigma'}e^{-\beta H'(\sigma')}} \right| \\
& =\left| \frac{ \sum_{\sigma:H(\sigma) < \Delta} e^{-\beta H(\sigma)} + \sum_{\sigma:H(\sigma) \geq \Delta} e^{-\beta H(\sigma)} - \mu \sum_{\sigma'}e^{-\beta H'(\sigma')}}{ \mu \sum_{\sigma'}e^{-\beta H'(\sigma')}} \right| \\
& =\frac{\left| \sum_{\sigma'} m_{\sigma'} e^{-\beta \left( H'(\sigma') + \delta_{\sigma} \right)} - \mu \sum_{\sigma'}e^{-\beta H'(\sigma')}\right|}{ \mu \sum_{\sigma'}e^{-\beta H'(\sigma')} } + O\left(\frac{e^{-\Delta}}{\mu Z_{H'}(\beta)}\right).
\end{split}
\end{equation}
The second term is independent of $\delta$, and the first term will only be upper-bounded by $e^{\beta \delta} -1$ if
\begin{equation} \label{linearly_independent}
\left| \sum_{\sigma'} e^{-\beta H(\sigma')} \left(m_{\sigma'}e^{-\beta \delta_{\sigma'}} - \mu \right) \right| - \mu(e^{\beta \delta} - 1 )\sum_{\sigma'}e^{-\beta H'(\sigma')} \leq 0.
\end{equation}

\noindent There are two cases to consider:\\
\paragraph{Case 1:} $\sum_{\sigma'} e^{-\beta H(\sigma')} \left(m_{\sigma'}e^{-\beta \delta_{\sigma'}} - \mu \right) < 0$. \\
Eq. \ref{linearly_independent} becomes:
\begin{equation} \label{case_1}
\sum_{\sigma'}e^{-\beta H'(\sigma')}\left[ \left(\mu - m_{\sigma'} e^{-\beta \delta_{\sigma'}} \right) +\mu \left(1 - e^{\beta \delta} \right) \right] \leq 0.
\end{equation}
This must hold for all $\beta > 0$, and all $\delta \in (0,1)$. Taking the limit $\beta \delta \rightarrow 0$, Eq. \ref{case_1} becomes:
\begin{equation} \label{not_possible}
\sum_{\sigma'}e^{-\beta H'(\sigma')} \left(\mu - m_{\sigma'} e^{-\beta \delta_{\sigma'}} \right) \leq 0.
\end{equation}
But $\sum_{\sigma'} e^{-\beta H(\sigma'))} \left(\mu - m_{\sigma'}e^{-\beta \delta_{\sigma'}}\right) > 0$ by assumption, so it is not possible to satisfy Eq. \ref{not_possible}, hence Eq. \ref{linearly_independent} cannot be satisfied for sufficiently small $\delta$.\\

\noindent\paragraph{Case 2:} $\sum_{\sigma'} e^{-\beta H(\sigma')} \left(m_{\sigma'}e^{-\beta \delta_{\sigma'}} - \mu \right) \geq 0$. \\
In this case, we can rewrite Eq. \ref{linearly_independent} as:
\begin{equation} \label{case_2_a}
\sum_{\sigma'}e^{-\beta H'(\sigma')}\left[ e^{-\beta \delta_{\sigma'}} \left(m_{\sigma'}  - \mu  \right) + \mu \left(e^{-\beta \delta_{\sigma'}} - e^{\beta \delta} \right)  \right] \leq 0.
\end{equation}
Take the limit $\beta \delta \rightarrow 0$ (which implies $\forall \sigma' \; \beta \delta_{\sigma'} \rightarrow 0$). In this limit $\left(e^{-\beta \delta_{\sigma'}} - e^{\beta \delta} \right)  \rightarrow 0$ and Eq. \ref{case_2_a} becomes:
\begin{equation} \label{case_2_b}
\sum_{\sigma'}e^{-\beta \left( H'(\sigma') + \delta_{\sigma'} \right)} \left(m_{\sigma'}  - \mu  \right) \leq 0.
\end{equation}

Now, $\sum_{\sigma'} e^{-\beta H(\sigma'))} \left(m_{\sigma'}e^{-\beta \delta_{\sigma'}} - \mu \right) \geq 0$ by assupmtion. We can rewrite this as:
\begin{equation} \label{case_2_c}
\sum_{\sigma'} e^{-\beta \left( H'(\sigma') + \delta_{\sigma'} \right)} \left[  \left(m_{\sigma'}  - \mu  \right) +  \mu e^{\beta \delta_{\sigma'}} \left(e^{-\beta \delta_{\sigma'}}  - 1\right) \right]  \geq 0
\end{equation}
We know that $e^{-\beta \left( H'(\sigma') + \delta_{\sigma'} \right)} \geq 0$ and $m_{\sigma'}  - \mu \in \mathbb{Z}$. By taking the limit $\beta \delta_{\sigma'} \rightarrow 0$ we can make $\mu e^{\beta \delta_{\sigma'}} \left(e^{-\beta \delta_{\sigma'}} - 1 \right)$ arbitrarily small, without the first term vanishing. In particular we can choose  $\mu e^{\beta \delta_{\sigma'}} \left(1 - e^{-\beta \delta_{\sigma'}} \right) \ll 1$. The only way to satisfy Eq. \ref{case_2_c} is then to have $m_{\sigma'}  - \mu \in \mathbb{N}$, which implies:
\begin{equation}
\sum_{\sigma'}e^{-\beta \left( H'(\sigma') + \delta_{\sigma'} \right)} \left(m_{\sigma'}  - \mu  \right) \geq 0.
\end{equation}
Together with Eq. \ref{case_2_b}, this implies:
\begin{equation} \label{case_2_d}
\sum_{\sigma'}e^{-\beta \left( H'(\sigma') + \delta_{\sigma'} \right)} \left(m_{\sigma'}  - \mu  \right) = 0
\end{equation}

We can assume without loss of generality that the $H'(\sigma')+\delta_\sigma$ are all distinct.\footnote{If there exist $\sigma_i'$ and $\sigma_j'$ such that $H'(\sigma_i') + \delta_{\sigma_i'} = H'(\sigma_j') + \delta_{\sigma_j'}$ then we can combine them in a single term with coefficient $m_{\sigma_i'} + m_{\sigma_j'} - 2\mu$.} The functions $e^{-\beta (H'(\sigma') + \delta_{\sigma})}$ are then linearly independent, so the only solution to Eq. \ref{case_2_c}, and therefore also the only solution to Eq. \ref{linearly_independent} is $\forall\sigma'\; m_{\sigma'} = \mu$, as claimed.
\end{proof}

It should be noted that in our proof of Lemma \ref{two-level} we have implicitly assumed that all the energy levels in the target Hamiltonian $H'$ are positive. This is a valid assumption as any classical Hamiltonian on a finite number of spins can be made positive by adding a constant energy shift. We could also prove the more general result where we allow $H'$ to have negative energy levels, but this requires including negative energy terms in our universal model, which introduces additional complexity to the construction, as we will see in Section \ref{two-parameters} when we prove our main result: the one-parameter universal model.

It is instructive to consider the number of parameters, and number of spins, required in the construction. The number of parameters needed to specify $H'$ is $O(\text{poly}(m,2^k))$, but the number of parameters in the simulator Hamiltonian $H$ is just two: $\alpha, \Delta$, together with the size of the lattice the Hamiltonian is acting on, $N$. We have already calculated that $N = O(2^{n^k})$, so the number of spins in the simulation is given by $N^2 = O(2^{n^k})$. Comparing this with the definition of universality given in the introduction, we see that the number of parameters needed to specify $H$ is in keeping with this definition of universality, but that the number of spins needed for the simulation is not efficient according to that definition.

However, this definition of efficiency is too restrictive for the translationally invariant case. A translationally invariant Hamiltonian on $N$ spins can be described using only $\text{poly}(\log(N))$ bits of information, whereas a $k$-local Hamiltonian which breaks translational invariance in general requires $\text{poly}(N)$ bits of information. The number of bits required to describe a $k$-local Hamiltonian on $n$ spins is $O(n^k)$. So by a simple counting argument we can see that it is not possible to encode all the information about a $k$-local Hamiltonian on $n$ spins in any translationally invariant Hamiltonian on $\text{poly}(n,m,2^k)$ spins, as this would require encoding $O(n^k)$ bits of information in $O(\log(n))$ bits. As such, we extend the definition of efficiency for universal models in such a way that it coincides with the original definition in \cite{Cubitt:2016} for the models considered previously, but is also meaningful for translationally invariant models:

\begin{definition}[Efficient universal model] \label{univ_2}
We say that a universal model is efficient if, for any Hamiltonian $H' = \sum_{I=1}^mh_I$ on $n$ spins composed of $m$ separate $k$-body terms $h_I$, $H'$ can be simulated by some Hamiltonian $H$ from the model specified by $\text{poly}(m,2^k)$ parameters which can be computed in time $\text{poly}(n,m,2^k)$, described by $\text{poly}(n,m,2^k)$ bits of information.
\end{definition}

This clearly encompasses the previous definition for non-translationally invariant Hamiltonians, as for these Hamiltonians the number of bits needed to describe the Hamiltonian scales polynomially with the number of spins in the system. But it also encompasses the possibility of efficient, universal, translationally invariant models.

With this modified definition of efficiency we can state and prove our main result for the two-parameter case:

\begin{theorem}[Translationally invariant universal model] \label{main}
The tiling model with $H_T = H_{univ}$, $c = -4$ and open boundary conditions forms a translationally invariant, efficient, universal model.
\end{theorem}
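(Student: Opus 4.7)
The plan is to build Theorem \ref{main} directly on top of Lemma \ref{two-level}, which already establishes simulation of all $k$-local Hamiltonians on two-level (binary) Ising spins by this tiling model. Two tasks remain: first, extend the simulation from binary spins to arbitrary $d$-level classical spins, and second, verify the efficiency requirements of Definition \ref{univ_2} together with translational invariance.

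For the extension to $d$-level spins, I would encode each $d$-level target spin $\sigma'_i$ as a block of $\lceil \log_2 d \rceil$ binary spins, using the standard binary encoding of $\{0,\ldots,d-1\}$. A $k$-local term $h_I$ on $d$-level spins then pulls back to a $k\lceil \log_2 d \rceil$-local term on binary spins by composing with this encoding. When $d$ is not a power of two, unused binary configurations are assigned a large positive energy (comparable to $\Delta$) so that they lie above the energy cut-off and do not interfere with simulation. The resulting binary Hamiltonian $\tilde{H}'$ can then be simulated directly by Lemma \ref{two-level}, and the fixed subset $P_i$ required by part \ref{part:eigenstates} of Definition \ref{simulation_definition} is simply the block of physical binary spins corresponding to $\sigma'_i$. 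Since Lemma \ref{two-level} already verifies partition-function faithfulness for $\tilde{H}'$, this property transfers automatically to $H'$, as the unused binary configurations contribute negligibly below the energy cut-off.

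For efficiency, I would check Definition \ref{univ_2} explicitly. The simulator Hamiltonian is specified by the triple $(\Delta,\alpha,N)$, only three global parameters, well within $\text{poly}(m,2^k)$. From the proof of Lemma \ref{two-level}, $\log N$ is polynomial in the length of the binary description of $\tilde{H}'$, which after the $d$-to-binary reduction is still $\text{poly}(n,m,2^k,d)$ bits; $\alpha$ is the greatest common measure of the rational energy values of $H'$, and $\Delta$ can be taken just above $E_{\max}(H')$, both computable in $\text{poly}(n,m,2^k)$ time. Hence the full specification of $H$ fits in $\text{poly}(n,m,2^k)$ bits and is computable within the required time bound. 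Translational invariance is then immediate: the tiling constraint Hamiltonian $H_{univ}$ and the flag term $\sum_j f_{\tilde{\gamma}}(\sigma_j)$ are applied with identical local interactions on every edge and every site of the 2D square lattice.

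The main obstacle I anticipate is cosmetic rather than conceptual: one must verify that the encoding of $H'$ into the binary expansion of $N$ still works after the $d$-to-binary reduction, that is, that the increased locality $k\lceil \log_2 d \rceil$ and the penalty terms for unused binary configurations are still captured unambiguously by the Elias-$\gamma'$ scheme introduced in the proof of Lemma \ref{two-level}, and that the universal Turing machine's runtime for evaluating $\tilde{H}'$ on a given spin configuration remains polynomial in $\log N$ so that the computation still fits inside the $N \times N$ grid. Both follow from the estimates already established in Lemma \ref{two-level}, so the overall extension amounts to a careful verification rather than any new construction.
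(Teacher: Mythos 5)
Your proposal takes a genuinely different route from the paper's actual proof of Theorem \ref{main}. The paper's proof is a short transitivity argument: by Lemma \ref{two-level} the tiling model can simulate any $k$-local Hamiltonian on two-level Ising spins, in particular the 2D Ising model with fields, which was already shown to be universal in \cite{Cubitt:2016}. Since simulating a universal model makes a model universal, the result follows, and efficiency is then a quick parameter count. This delegates the full scope of universality — including $d$-level spins and continuous spins — to the Ising universality proof of \cite{Cubitt:2016}, and avoids re-constructing anything inside the tiling framework. Your approach, by contrast, extends Lemma \ref{two-level} directly to $d$-level spins via a binary block encoding, which is precisely the ``alternative method'' the paper sketches in the paragraph immediately following its proof of Theorem \ref{main} but deliberately does not carry out. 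The trade-off is clear: the paper's route is short and leans on prior work; yours is self-contained but requires working out details the paper avoids.

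Two of those details are not cosmetic. First, your proof only treats discrete $d$-level spins. Universality in the sense of \cite{Cubitt:2016} quantifies over \emph{all} classical Hamiltonians, including those on continuous spins $\sigma_i \in \mathbb{S}^D$; the paper's sketch of the direct route explicitly notes that the continuous case must be handled separately by a discretization argument following \cite{Cubitt:2016}. Your argument as written does not cover this, whereas the paper's transitivity proof inherits it automatically from the Ising universality result. Second, the mechanism by which you assign ``a large positive energy (comparable to $\Delta$)'' to unused binary configurations is underspecified within the tiling model. In that model, spin-configuration energies arise from the $\alpha\sum_j f_{\tilde\gamma}(\sigma_j)$ flag term, in integer multiples of $\alpha$. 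To assign energy of order $\Delta$ through this channel, the Turing machine would have to output roughly $\Delta/\alpha$ flag symbols, requiring $\Delta/\alpha$ to be encoded in $N$ and to fit in a single tape row; this introduces a dependence between $N$, $\alpha$, and $\Delta$ that you would need to control. The paper's sketch of the direct route resolves this cleanly by adding a new flag symbol $\nu$ with an explicit $\Delta f_\nu$ contribution to the Hamiltonian (a mild modification of the tiling-model definition), so that one $\nu$-output alone incurs energy $\Delta$. You should either adopt that device or verify that your penalty assignment is consistent with the model's definition and the required scaling of $N$.
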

\begin{proof}
In Lemma \ref{two-level} we showed that any any $k$-local Hamiltonian on two-level Ising spins can be simulated by a Hamiltonian from this tiling model. The 2D Ising model with fields meets this requirements, and was already shown to be a universal model in \cite{Cubitt:2016}. Since we can simulate a universal model, our tiling model is itself universal.

If $H' = \sum_{I=1}^mh_I$ is a $k$-local Hamiltonian acting on $n$ spins, then the Ising model $H_{Ising}$ simulating it will be specified by $\text{poly}(m,2^k)$ parameters and act on $\text{poly}(n,m,2^k)$ spins.
Thus the tiling model Hamiltonian $H$ simulating this will be specified by two parameters together with the lattice size, and will act on $O(2^{\text{poly}(n,m,2^k)})$ spins, so will be described by $O(\text{poly}(n,m,2^k))$ bits of information.
We demonstrated in the proof of Lemma \ref{two-level} that the size of the lattice needed to simulate $H'$ can be computed in time $O(n^k)$.
\end{proof}

It should be noted that an alternative method to prove Theorem \ref{main} would be to generalise our construction from the proof of Lemma \ref{two-level} to deal with $d$-level spins, and show directly that this tiling model can simulate any Hamiltonian on discrete Ising spins. This would require modifying the mapping between the states of the physical spins on the simulator system, so that instead of one physical spin on the simulator system mapping to one spin on the original system, we would now have $\log_2(d)$ physical spins (which we will refer to as a physical set) on the simulator system mapping to one spin on the original system. In the case where $\log_2(d)$ is an integer we would still have a one-to-one mapping between the states of the physical spins and the states of the original spins. In the case where $\log_2(d)$ is not an integer the number of spins in a physical set would be rounded up to the nearest integer, and some states of the physical set would not correspond to any state of the original spin. In order to handle instances where the states of the physical set do not correspond to any state of the original spin we would introduce an additional term in the Hamiltonian, $\sum_{i \in V}\Delta f_{\nu}(\sigma_i)$, where $f_\nu: \mathcal{S} \rightarrow \mathbb{Z}$ is given by $f_\nu(\nu) = 1$, $f_\nu(\sigma_i) = 0$ $ \forall \sigma_i \neq \nu$ and $\nu$ is a spin state that corresponds to a new alphabet symbol, $\nu$, in the layer 2 Turing machine. We could construct Turing machine rules in such a way that if the states of the physical set do not correspond to a valid configuration the Turing machine will output a $\nu$ symbol, picking up energy $\Delta$ and therefore taking that configuration above the energy cut-off. This does not affect the number of parameters in the model, as we were already using the $\Delta$ parameter in the construction.

The argument sketched above gives an alternative, direct proof that this tiling model could simulate all Hamiltonians on discrete spins. In order to show that this tiling model can simulate all Hamiltonians on continuous spins which are Lipschitz-continuous in each argument we would then follow the argument in \cite{Cubitt:2016}, showing that they can be simulated to arbitrary accuracy using discrete spins. The encoding of $H'$ in the binary expansion of $N$ would also need to specify the value of $d$.

Calculating the local spin dimension of our translationally invariant universal spin model would require explicitly constructing the transition rules for the Turing machine in layer 2 of the tiling. However, since there exist universal Turing machines with 3 alphabet symbols and 9 states we can upper bound the local spin dimension by assuming that a (9,3) Turing machine was used.\footnote{Although the Turing machine used in layer 2 has 6 alphabet symbols, 3 of these symbols are used for specific simulation purposes, and cannot be used in the computation.}  Doing so we find that the local spin dimension of the translationally invariant universal Hamiltonian is 43,510. We have made no effort to optimise the local spin dimension in this work - our focus was on proving that translationally invariant universal models existed. In previous Hamiltonian complexity work the local dimension from the quantum construction in \cite{Gottesman:2009} was brought down from $O(10^{6})$ to 42 in \cite{Bausch:2017} by encoding simpler models of computation in the Hamiltonian. Similar techniques could be applied to this construction to reduce the local spin dimension.

\subsection{One parameter model} \label{two-parameters}
Having established the existence of a universal model which requires only two parameters ($\Delta, \alpha$) it is natural to ask whether all these parameters are strictly necessary, or whether it is possible to generate a universal model requiring fewer parameters. In order to show that it is indeed possible to construct a universal model with only one parameter we require a technical lemma regarding irrational rotations on the unit circle.

We begin by introducing some notation. The unit circle will be denoted $S^1$, with the points on $S^1$ represented by the real interval $[0,1]$. A rotation of a point $x \in S^1$ about an angle $\theta$ is then given by $R_\theta(x) = \left(x + \frac{\theta}{2\pi} \right)\mbox{mod }1$. The orbit of a point $x$ under a rotation $R_\theta$ is the set of points $O = \{R^n_\theta(x) | n \in \mathbb{Z} \}$, where $R^n_\theta(x)$ indicates that the rotation was applied $n$ times.

\begin{lemma} \label{rotations}
  The orbit of any point on the circle under an irrational rotation $R_\theta(x)$, where $\theta = 2\pi \alpha$, $\alpha \notin \mathbb{Q}$ 
  is dense in $S^1$.
\end{lemma}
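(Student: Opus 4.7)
The plan is a standard pigeonhole/Dirichlet-style argument. Fix an arbitrary point $x \in S^1$ and an arbitrary target $y \in S^1$; we wish to show that for every $\varepsilon > 0$ there is some $n \in \mathbb{Z}$ with $|R_\theta^n(x) - y| < \varepsilon$ (where distance is measured along the circle).

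First, pick $N \in \mathbb{N}$ with $1/N < \varepsilon$ and consider the $N+1$ iterates $R_\theta^0(x), R_\theta^1(x), \ldots, R_\theta^N(x)$. Partition $S^1$ into $N$ half-open arcs of length $1/N$. By the pigeonhole principle, two of these iterates, say $R_\theta^i(x)$ and $R_\theta^j(x)$ with $0 \le i < j \le N$, must lie in the same arc, so their circular distance is strictly less than $1/N$. Because $R_\theta$ is an isometry of $S^1$, the composition $R_\theta^{j-i} = R_{(j-i)\theta}$ is itself a rotation, and the distance $|R_\theta^{j-i}(z) - z|$ is independent of $z$. Thus $R_\theta^k$, with $k := j-i \ge 1$, is a rotation by some angle $\eta$ whose circular magnitude satisfies $0 < |\eta|/(2\pi) < 1/N$. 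The strict positivity here uses irrationality of $\alpha$: if $k\alpha$ were an integer, then $R_\theta^k$ would be the identity, contradicting $\alpha \notin \mathbb{Q}$.

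Next, consider the sub-orbit $\{R_\theta^{mk}(x) : m \in \mathbb{Z}\}$. Consecutive points in this sub-orbit differ by the rotation $R_\theta^k$ of fixed angular displacement $|\eta|/(2\pi) < 1/N$, so as $m$ varies the points $R_\theta^{mk}(x)$ step around $S^1$ with a fixed stride of less than $1/N$. It follows that every point of $S^1$ lies within circular distance $1/N < \varepsilon$ of some $R_\theta^{mk}(x)$, and in particular $y$ does. Since $\varepsilon$ was arbitrary, the full orbit $\{R_\theta^n(x) : n \in \mathbb{Z}\}$ is dense in $S^1$.

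I expect no serious obstacle: the only subtle point is ensuring that the small rotation $R_\theta^k$ obtained from pigeonhole is genuinely non-trivial, which is exactly where the irrationality hypothesis on $\alpha$ enters. The remainder is elementary.
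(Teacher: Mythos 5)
Your proof is correct and follows essentially the same pigeonhole argument as the paper: find two iterates within $1/N$ of each other, use isometry to obtain a small nontrivial rotation (with irrationality guaranteeing non-triviality), and then observe that stepping by this small rotation sweeps the circle to within $\varepsilon$. The only cosmetic difference is that the paper first proves all iterates are distinct and then applies pigeonhole, whereas you invoke irrationality just once, directly where it is needed.
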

\begin{proof}
This is a standard result with a number of proofs. We sketch one argument here.

First consider two points in the orbit, $R^n_\theta(x)$ and $R^m_\theta(x)$. Using the definition of the rotation we can write:
\begin{equation}
R^n_\theta(x) = \left(x + n\frac{\theta}{2\pi} \right)\mbox{mod }1 = \left(x + n\alpha \right)\mbox{mod }1
\end{equation}
Similarly:
\begin{equation}
R^m_\theta(x) = \left(x + m\alpha \right)\mbox{mod }1
\end{equation}
Hence if two points in the orbit are equal, we must have:
\begin{equation}
\left(x + n\alpha \right)\mbox{mod }1 = \left(x + m\alpha \right)\mbox{mod }1
\end{equation}
Simplifying, this gives:
\begin{equation}
\alpha = \frac{b-a}{n-m}
\end{equation}
where $a,b \in \mathbb{Z}$ such that $\left(x + n\alpha \right)\mbox{mod }1 = x + n\alpha - a$, and $\left(x + m\alpha \right)\mbox{mod }1 = x + m\alpha - b$. If both $a$ and $b$ were equal to zero, or $a=b$, then $\alpha$ would be equal to zero, which contradicts our initial statement. Hence at least one of $a$ or $b$ does not equal zero, and $a \neq b$. We have, therefore, that either $n=m$ or that $\alpha$ is a ratio of integers, which contradicts our assertion that $\alpha \notin \mathbb{Q}$. Hence $R^n_\theta(x) \neq R^m_\theta(x)$  for $n \neq m$.

We now apply the pigeonhole principle, which states that if we try to fit $n$ items into $m$ containers, where $n > m$, then at least one container will contain more than one item. Consider dividing the unit circle into $Z \in \mathbb{Z}$ disjoint intervals of length $\epsilon = \frac{1}{Z}$. We have just shown that every point on an irrational orbit is distinct, so the points $R^0_\theta(x), R^1_\theta(x),...,R^{Z}_\theta(x)$ are all distinct points on $S^1$. We have, therefore, $Z+1$ distinct points on $S^1$, but only $Z$ disjoint intervals, so at least one of our intervals contains two points. Let $R^l_\theta(x)$ and $R^m_\theta(x)$ be two such points which fall into the same interval, where $0 < m < l < Z$.

We have that the intervals are of length $\epsilon$, so the distance $|R^l_\theta(x) - R^m_\theta(x)|< \epsilon$. Rotation preserves distances, hence $|R^{l-m}_\theta(x) - x| < \epsilon$. The orbit $O = \{R^{n(l-m)}_\theta(x) | n \in \mathbb{Z} \}$ is therefore $\epsilon$-dense in $S^1$, since the first $(Z+1)$-points on the orbit cover $S^1$ in equidistant steps which are separated by less than $\epsilon$.

We have left $x$ and $Z$ arbitrary, and can make $\epsilon$ arbitrarily small by increasing $Z$, hence the orbit of any point on the circle under an irrational rotation is dense in $S^1$.
\end{proof}

It is useful to consider how many points on the orbit we have to take to be $\epsilon$ close to every point on the unit circle. From the proof we see that the first $(Z+1)$-points of the orbit $O = \{R^{n(l-m)}_\theta(x) | n \in \mathbb{Z} \}$, where $l-m \leq Z$ cover $S^1$ in equidistant steps which are separated by less than $\epsilon$. Hence to be $\epsilon$ close to every point in the circle requires at most $(l-m)(Z+1) = O(Z^2) = O(\frac{1}{\epsilon^2})$ points on the orbit. In fact, Weyl's equidistribution theorem demonstrates that the points on an irrational orbit are equidistributed on the unit circle, so we can do slightly better, and need just $O(\frac{1}{\epsilon})$ points on the orbit. The proof is omitted here but can be found in Chapter 12 of \cite{Miller:2006}.

\begin{corollary}
The set of real numbers given by: $T = \{a\sqrt{2} - b |a, b \in \{0, \mathbb{Z}^+\}\}$ is dense in $\mathbb{R}$.
\end{corollary}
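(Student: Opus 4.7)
The plan is to reduce the density of $T$ in $\mathbb{R}$ to the density of the irrational rotation orbit, applied to $\alpha = \sqrt{2}$. Since $\sqrt{2}$ is irrational, Lemma \ref{rotations} (applied with $\theta = 2\pi\sqrt{2}$ and starting point $x = 0$) says that the orbit $\{a\sqrt{2}\bmod 1 : a \in \mathbb{Z}^+\}$ is dense in $[0,1)\cong S^1$. The role of the positive integer $a$ is inherited directly; positivity of $b$ is recovered by a shift argument.

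First I would fix an arbitrary $r\in\mathbb{R}$ and $\epsilon>0$, and write $r = k + s$ with $k\in\mathbb{Z}$ and $s\in[0,1)$. By the density statement, there exists $a\in\mathbb{Z}^+$ with $|\{a\sqrt{2}\} - s| < \epsilon$, where $\{x\}$ denotes the fractional part. Setting $m = \lfloor a\sqrt{2}\rfloor$ and $b = m - k$, a direct computation gives
\begin{equation}
a\sqrt{2} - b - r = \{a\sqrt{2}\} + m - (m-k) - k - s = \{a\sqrt{2}\} - s,
\end{equation}
so $|a\sqrt{2} - b - r| < \epsilon$.

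The only thing left to verify is that $b$ can be chosen non-negative, i.e.\ $m \geq k$. This is where I would use the fact that density implies there are infinitely many $a\in\mathbb{Z}^+$ with $|\{a\sqrt{2}\} - s|<\epsilon$ (if only finitely many orbit points fell into the $\epsilon$-neighbourhood of $s$, a slightly smaller open neighbourhood avoiding them would contradict density). Picking such an $a$ large enough so that $\lfloor a\sqrt{2}\rfloor \geq k$ forces $b = m - k \in \{0\}\cup\mathbb{Z}^+$, which is what the definition of $T$ requires.

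The main (minor) obstacle is this non-negativity issue for $b$, together with the slight technicality that the lemma as stated considers $n\in\mathbb{Z}$ rather than $n\in\mathbb{Z}^+$. Both are resolved by the same observation above: the orbit intersects any open interval infinitely often, so we have infinitely many arbitrarily large positive values of $a$ to choose from. Since $r$ and $\epsilon$ were arbitrary, $T$ is dense in $\mathbb{R}$.
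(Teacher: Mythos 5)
Your proof is correct and takes a genuinely different route from the paper's. The paper extends density from $[0,1]$ to all of $\mathbb{R}$ by exploiting the additive semigroup structure of $T$: it moves downward by increasing $b$ (subtracting integers), and moves up to a neighbourhood of any $z\in\mathbb{Z}^+$ by taking $z$ copies of a $T$-element that is $\frac{\epsilon}{z}$-close to $1$ and summing them (the sum stays in $T$ since $a$'s and $b$'s add coordinatewise). You instead give a direct, explicit construction: write $r = k + s$, pick $a$ with $\{a\sqrt{2}\}$ close to $s$, and set $b = \lfloor a\sqrt{2}\rfloor - k$. This is cleaner in two respects: it only requires a single orbit point at precision $\epsilon$ (rather than precision $\epsilon/z$, which in the paper's argument degrades as $|k|$ grows), and it avoids having to argue about sums of nearby approximations. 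One small point worth tightening: you justify the choice of a large $a\in\mathbb{Z}^+$ by "density implies infinitely many orbit points in the interval," but Lemma~\ref{rotations} as stated takes $n\in\mathbb{Z}$, so infinitely many intersecting indices does not by itself yield infinitely many \emph{positive} ones. This is easily repaired by noting that the pigeonhole argument in the paper's proof of Lemma~\ref{rotations} actually establishes $\epsilon$-density using only non-negative iterates $R^{n(l-m)}_\theta$ for $0\le n\le Z+1$ with $l-m>0$, so the forward orbit $\{a\sqrt{2}\bmod 1 : a\in\mathbb{Z}^+\}$ is itself dense and does contain arbitrarily large $a$ in any given interval.
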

\begin{proof}
Lemma \ref{rotations} implies that $T$ is dense in $[0,1]$. Any point in $\mathbb{R}$ can be reached from $[0,1]$ by adding or subtracting integers. The definition of $T$ allows you to subtract multiples of 1 so we can subtract arbitrary integers, and 1 is in the set $[0,1]$. We can also add multiples of a number which is arbitrarily close to 1. Thus, for any $\epsilon$, we can get $\epsilon$-close to any $z \in \mathbb{Z}^+$ by adding together $z$ copies of a number which is $\frac{\epsilon}{z}$-close to 1.
\end{proof}

It should be noted that using $\sqrt{2}$ is arbitrary -- any irrational number would do. We can now demonstrate that there exists a translationally invariant model on one parameter which is universal. As in the two-parameter case we are assuming open boundary conditions, but the construction can easily be adapted to work equally well with periodic boundary conditions. We first define the model:

\begin{definition}[Reduced parameter tiling model]
  A `reduced parameter tiling model' is a family of Hamiltonians specified by a graph $G = (V,E)$ which is restricted to be a 2D square lattice, a tiling constraint Hamiltonian $H_T$; an energy offset $c$; and an energy penalty, $\tilde{\Delta}$.

  A discrete, classical spin $\sigma_i$ which can take on values in the set $\mathcal{S}$ is associated with each vertex $i \in V$, and Hamiltonians in the model are given by:
\begin{equation}
H(\{\sigma_i\}) = \tilde{\Delta} (H_T(\{\sigma_i\}) - c) + \sum_{j \in V} f_{\tilde{\gamma},\tilde{\nu}}(\sigma_j)
\end{equation}
where $\tilde{\gamma}, \tilde{\eta} \in \mathcal{S}$ and $f_{\tilde{\gamma},\tilde{\nu}}: \mathcal{S} \rightarrow \mathbb{Z}$ is the function $f_{\tilde{\gamma},\tilde{\nu}}(\tilde{\gamma}) = +\sqrt{2}$, $f_{\tilde{\gamma},\tilde{\nu}}(\tilde{\nu}) = -1$ and $\forall \sigma_i \neq \tilde{\nu}, \tilde{\gamma}\;\; f_{\tilde{\gamma},\tilde{\nu}}(\sigma_i) = 0$.
\end{definition}


 \begin{lemma}[Simulating two-level Ising spins using the reduced parameter tiling model] \label{two-level-reduced}
There exists a single fixed tiling constraint Hamiltonian, $H_{R}$, with open boundary conditions, such that the reduced parameter tiling model with $H_T = H_{R}$, and $c = -4$, can simulate all k-local classical Hamiltonians on two-level Ising spins.
\end{lemma}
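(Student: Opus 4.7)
The plan is to adapt the three-layer construction of Lemma \ref{two-level} almost verbatim, changing only the conversion of the computed energy of a configuration into on-site flag energies. Rather than outputting $M = E_\sigma / \alpha$ copies of a single flag symbol, the layer-2 Turing machine will compute a pair of nonnegative integers $(a_\sigma, b_\sigma)$ with $|a_\sigma\sqrt{2} - b_\sigma - E_\sigma| \le \delta$ and write $a_\sigma$ copies of a $\gamma$ symbol and $b_\sigma$ copies of a new $\nu$ symbol on its output tape. The corollary following Lemma \ref{rotations} guarantees that such a pair exists for any $E_\sigma$ and any $\delta > 0$, and the Weyl equidistribution bound referenced there lets us take $a_\sigma, b_\sigma = O(E_\sigma / \delta)$, which easily fits inside the $N$-length top interior row since $N = O(2^{n^k})$.

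\textbf{Tiling and Turing-machine modifications.} Two flag tiles $\tilde{\gamma}$ and $\tilde{\nu}$ replace the single $\langle\tilde{\gamma}\rangle$ tile of Lemma \ref{two-level}; both are forced into the top interior row by duplicating the rules of the additional-rules table for each tile, and the modified top-boundary rules forbid any unmatched $\gamma$ or $\nu$ output symbol. The layer-2 Turing machine is extended with a short subroutine that produces $(a_\sigma, b_\sigma)$ by iterating $a = 1, 2, \dots$ and testing the nearest integer to $a\sqrt{2} - E_\sigma$ as a candidate $b$; since only $O(1/\delta)$ trials are needed, this runs in polynomial time and uses no more tape than is already available. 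Everything else, including the encoding of $H'$ in the binary expansion of $N$, the layer-1 binary counter, the determinism of the layer-2 Turing machine, and the third-layer boundary enforcement, is carried over unchanged from Lemma \ref{two-level}.

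\textbf{Energy accounting and main obstacle.} For any zero-violation tiling, the on-site contribution is exactly $a_\sigma\sqrt{2} - b_\sigma \in [E_\sigma - \delta,\, E_\sigma + \delta]$, giving condition (\ref{part:eigenvalues}) of Definition \ref{simulation_definition}, while condition (\ref{part:partition_function}) follows from the degeneracy lemma already in the excerpt, since the layer-2 Turing machine remains deterministic and introduces exactly the same multiplicative degeneracy as in Lemma \ref{two-level}. The one genuinely new issue is the $-1$ on-site penalty carried by $\tilde{\nu}$: a single tiling violation produces $+\tilde{\Delta}$, but misplaced $\tilde{\nu}$ tiles could in principle remove up to $N^2$ from the total energy. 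The adjacency rules inherited from the $\langle\tilde{\gamma}\rangle$ table, however, force every $\tilde{\nu}$ outside the top row to incur at least one additional tiling violation, so its net contribution is $\tilde{\Delta} - 1$; summing over the grid, any non-optimal tiling has energy at least $\tilde{\Delta} - N^2$ above the corresponding optimal value, so a choice of $\tilde{\Delta}$ of order $\Delta + N^2$ suffices to place every invalid configuration above the cutoff. Verifying this bookkeeping rigorously, and confirming that the required $\tilde{\Delta}$ genuinely depends only on $\Delta$ and the lattice size (so the model stays one-parameter), is the chief obstacle; every other ingredient is a direct transplant from the two-parameter proof.
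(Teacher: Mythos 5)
Your construction is the same at the level of machinery (duplicate the flag system, output pairs producing $a\sqrt2-b$, push invalid tilings above $\Delta$ with $\tilde\Delta = \Delta + N^2$), but it diverges from the paper in one load-bearing step and the divergence introduces a gap. You have the fixed layer-2 Turing machine \emph{find} $(a_\sigma,b_\sigma)$ at runtime ``by iterating $a=1,2,\dots$ and testing the nearest integer to $a\sqrt2 - E_\sigma$.'' A fixed, finite Turing machine cannot evaluate $a\sqrt2$ exactly, so this subroutine as stated isn't well-defined; to make it precise you would have to bake a rational approximation of $\sqrt2$ into the transition rules (or compute one), with precision depending on the target $\delta$ and on $E_{\max}$, and then track the approximation error through the rest of the proof. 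None of that analysis appears. The paper avoids the issue entirely by never asking the machine to touch an irrational: the integer pairs $(a_j,b_j)$ for every spectrum value of every $k$-local term are precomputed \emph{offline} (using exactly the density corollary you cite) and written directly into the binary expansion of $N$, via the encoding
$H' := \gamma'(n)\cdot\gamma'(k)\cdot\bigl[\gamma'(i)^{\cdot k}\cdot(\gamma'(a_j)\cdot\gamma'(b_j))^{\cdot 2^k}\bigr]^{\cdot m}\cdot 1$.
The machine then only performs integer additions of the relevant $a_j$'s and $b_j$'s. You should adopt this route; it also explains the paper's revised grid-size scaling $N=O(2^{2^k/\epsilon})$, which replaces the $N=O(2^{n^k})$ you carried over unmodified — since $\alpha$ is gone, the precision $\epsilon$ must now appear in the size of the encoded description, not just in a free parameter.

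A smaller quibble: your claim that ``every $\tilde\nu$ outside the top row incurs at least one additional tiling violation'' is stronger than what the adjacency rules actually guarantee (a contiguous block of misplaced $\tilde\nu$ tiles can share violations at its boundary). The paper does not rely on any per-tile accounting; it simply takes the conservative worst case that $O(1)$ tiling errors could in principle unlock up to $N^2$ bonuses and chooses $\tilde\Delta = \Delta + N^2$ accordingly. Your conclusion is the same, so this is not a gap, but the justification you give is shakier than needed.

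Everything else — the two-flag tile rules, the deterministic layer-2 machine, the unchanged layers 1 and 3, and the appeal to the degeneracy lemma for the partition-function condition — is a faithful transplant and matches the paper.
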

\begin{proof}
The proof of this lemma is closely related to the proof of Lemma \ref{two-level}, so we omit some of the details here, and refer back to the previous proof.

\paragraph{Outline of proof:}
The target Hamiltonian $H'$ is again encoded in the size of the grid to be tiled, $N$. As before, the first layer of the tiling implements a binary counter Turing machine from top to bottom, so that the bottom row of the first layer contains $N$ in binary; this is copied to the bottom row of the second layer, taking up approximately $\log(N)$ tiles, and the remaining tiles are unconstrained, with a fixed subset of these tiles acting as the `physical' spins. The second layer of tiles again encodes a Turing machine, which reads in the description of $H'$ ($N$ in binary), and the state of the physical spins, and outputs the energy that should assigned to this configuration by computing the energy of each $k$-local term to the desired precision. However, this time the energy is computed in the form $a\sqrt{2} - b$. The tiling rules then force the appropriate number of spins to be in the $\tilde{\gamma}$ and $\tilde{\eta}$ states, so the system has the correct energy. The third layer of tiles is used to implement boundary conditions.

Any non-optimal tiling will incur an energy penalty $\tilde{\Delta}$, but we now have negative energy terms in the Hamiltonian,\footnote{There are already negative energy terms arising from the third layer of tiling, but the analysis in \cite{Gottesman:2009} demonstrates that the number of energy bonuses from these negative terms is bounded by $c=-4$.} so a non-optimal tiling may also pick up some energy bonuses (i.e. negative energy contributions). The maximum number of energy bonuses any non-optimal tiling could pick up is $N^2$. We will later discuss ways to bound the number of tiling errors needed to induce a certain number of energy bonuses. But for now we assume that it is possible to cause $N^2$ energy bonuses with $O(1)$ tiling errors. So in order to push all incorrect tilings above energy $\Delta$ we require $\tilde{\Delta} = \Delta + N^2$. With this condition any non-optimal tiling will have an energy greater than $\Delta$, so provided $\Delta > E_{\max(H')}$ any spin configuration which corresponds to an non-optimal tiling will be above the energy cut-off.

\paragraph{Encoding $H'$:}
As in the previous construction the Hamiltonian to be simulated $H'$ is encoded in the binary expansion of $N$.
Let $H' = \sum_{I=1}^m h_I$ be a $k$-local Hamiltonian acting on $n$ two-dimensional Ising spins. As before, $H'$ is fully described by the energy of each of the $2^k$ possible configurations of each of the $m$ $k$-local terms. Each energy will be specified by giving two numbers, $a$ and $b$, where the energy $\lambda = a\sqrt{2} - b$. Let us assume we want to specify each energy to precision $\epsilon$. Then $a$, $b$ scale as $\frac{1}{\epsilon}$. A complete description of $H'$ therefore consists of $n$, and for each of the $m$ $k$-local terms, $2^k$ elements giving the energy, where each of these elements contains two binary numbers which scale as $\frac{1}{\epsilon}$.

The description of the Hamiltonian will again be given in Elias-$\gamma'$ coding, using the same format as in Lemma \ref{two-level}, with the modification that the energy levels will now each be specified by two Elias-$\gamma'$ numbers. This leads to the Hamiltonian $H'$ being specified in binary in the form:
\begin{equation}
H' := \gamma'(n)\cdot \gamma'(k)\cdot \left[ \gamma'(i)^{\cdot k} \cdot \left( \gamma'(a_j) \cdot \gamma'(b_j) \right)^{\cdot 2^k} \right]^{\cdot m}\cdot 1
\end{equation}

In the proof of Lemma \ref{two-level} we found that $N$ scaled as $O(2^{n^k})$. Now that we are including the precision information in $N$ too this scaling will become $O(2^{\frac{2^k}{\epsilon}})$. As in the previous construction, $N$ can clearly be efficiently computed from any other reasonable description of $H'$.

\paragraph{Tiling rules:}\footnote{The tiling rules in this section are stated in terms of forbidden and allowed pairs of neighbouring tiles. As in Section \ref{open_bc}, all forbidden pairs of neighbouring tiles have weight +1, and all allowed pairs have weight 0.}
The tiling rules for the third layer of tiles and the border of layers one and two are unchanged from the construction in \cite{Gottesman:2009}, outlined in Section \ref{open_bc} and Section \ref{gottesman_details} respectively.
The rules for the first layer of the tiling are unchanged from Lemma \ref{two-level}.

\paragraph{Bottom boundary:} As before the second layer will encode a Turing machine. But for this construction we will increase the alphabet of the Turing machine, and use the alphabet $\Sigma_{M_U} = \{0,1,0',1',\gamma,\eta,\#\}$. The meanings of the primed symbols are unchanged from Lemma \ref{two-level}, and the $\gamma$ and $\eta$ are only used in the output. The boundary rules for the bottom boundary are unchanged from Lemma \ref{two-level}, and again ensure that the physical spins can only be in the 0 or 1 state.

\paragraph{Computation:} The computation is essentially unchanged from the previous construction, with the only difference being that now for each $k$-local term in the Hamiltonian the computation will output $a$ $\gamma$s, and $b$ $\eta$s where the energy of that $k$-local term with the particular spin configuration of physical spins is given by $\lambda = a\sqrt{2} - b$. The Turing machine can also output $0$ in the case of a zero energy term. At the point the Turing machine has finished its computation its tape will contain a string of $0$s, $\gamma$s and $\eta$s, followed by some string of 0s and 1s, such that if the total energy of the configuration of physical spins is $E = a'\sqrt{2} - b'$ then the number of $\gamma$s is $a'$ and the number of $\eta$s is $b'$. As before, there will always be enough room to write all the $\gamma$'s and $\eta$'s in a single $N$-length row of tiles.

\paragraph{Top boundary:} The tiling rules for the top boundary are very similar to those in the construction of Lemma \ref{two-level}, but they have to be updated to allow for the inclusion of a $\langle \tilde{\eta}\rangle$ tile. The updated rules for the top boundary are provided in Table \ref{top-one}. The additional tiling rules which ensure that the $\langle \tilde{\eta}\rangle$ tile does not appear elsewhere in an optimal tiling are summarised in Table \ref{additional-eta}.

These tiling rules are just straightforward generalisations of the tiling rules from Lemma \ref{two-level}, with the tiling rules for $\langle \tilde{\eta}\rangle$ tiles following straightforwardly from the tiling rules for $\langle \tilde{\gamma}\rangle$ tiles.
\begin{table}
\centering
\begin{tabular}{c | c c c c }
 &   & Position of $\langle \tilde{\eta}\rangle$ & &\\
Adjacent tile type & Below & Above & Left & Right \\
 \hline
$\tileN$ & Y & N & N & N \\
$[a]$ & N & If $a = \eta$ & Y & Y \\
$[a,q,r]$ & N & If $a = \eta$ & Y & Y \\
$[a,q,l]$ & N & If $a = \eta$ & Y & Y \\
$[a,q,R]$ & N & If $a = \eta$& Y & Y \\
$[a,q,L]$ & N & If $a = \eta$ & Y & Y \\
$\tileL$ & N & N & N & Y \\
$\tileR$ & N & N & Y & N \\
$\tileS$ & N & N & N & N \\
$\langle \tilde{\gamma}\rangle$ & N & N & Y & Y \\
$\langle \tilde{\eta}\rangle$ & N & N & Y & Y
\end{tabular}
\caption{Additional tiling rules for the $\langle \tilde{\eta} \rangle$ tile.}
\label{additional-eta}
\end{table}

\begin{table}
\centering
Adjacent interior tile
\begin{tabular}{c | c c c c c c c}
Boundary tile & $[a]$ & $[a,q,r]$ & $[a,q,l]$  & $[a,q,R]$ & $[a,q,L]$ & $\langle \tilde{\gamma}\rangle$ &  $\langle \tilde{\eta}\rangle$ \\
 \hline
 $\tileN$ & If $a \neq \gamma, \eta$ & $a \neq \gamma, \eta$ & $a \neq \gamma, \eta$ & $a \neq \gamma,\eta$ & $a \neq \gamma,\eta$ & Y & Y
 \end{tabular}
 \caption{Modified tiling rules for the top boundary in the one-parameter universality construction.}
 \label{top-one}
\end{table}

\paragraph{Partition function:}
The argument for reproducing the partition function is unchanged from Lemma \ref{two-level}.

\end{proof}

It should be noted that in this construction we can simulate negative energy levels without a problem, so there is no need to add a constant energy shift to the target Hamiltonian in order to make the target Hamiltonian positive semidefinite as we did in the two-parameter case.
\begin{theorem}[Translationally invariant universal model] \label{main-reduced}
The reduced parameter tiling model in with $H_T = H_{R}$, $c=-4$ and open boundary conditions forms a translationally invariant universal model.
\end{theorem}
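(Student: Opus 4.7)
The plan is to mirror the proof of Theorem \ref{main}, invoking Lemma \ref{two-level-reduced} in place of Lemma \ref{two-level} and exploiting the fact that simulation (in the sense of Definition \ref{simulation_definition}) is closed under composition. Since Lemma \ref{two-level-reduced} already establishes that every $k$-local classical Hamiltonian on two-level Ising spins can be simulated by some Hamiltonian drawn from the reduced parameter tiling model with $H_T = H_R$ and $c=-4$, it suffices to exhibit a universal model whose Hamiltonians lie within this class, and then chain the two simulations.

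First I would recall from \cite{Cubitt:2016} that the 2D Ising model with fields is a universal classical model: for every classical spin Hamiltonian $H'$ (discrete, or continuous and Lipschitz in each argument) there exists a 2D Ising-with-fields Hamiltonian $H_{\mathrm{Ising}}$ simulating $H'$ to any desired accuracy. Since Ising-with-fields Hamiltonians are $2$-local Hamiltonians on two-level spins, they fall precisely within the class handled by Lemma \ref{two-level-reduced}. Applying that lemma to $H_{\mathrm{Ising}}$ yields a reduced parameter tiling model Hamiltonian $H$ that simulates $H_{\mathrm{Ising}}$, and composing the two simulations shows that $H$ simulates $H'$.

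The one genuinely new ingredient compared to the two-parameter case is that the only freely tunable real parameter is now $\tilde{\Delta}$, so the $\alpha$-degree of freedom must be recovered from the irrational-number machinery: energies encoded in the fixed form $a\sqrt{2}-b$ can approximate any desired real energy to arbitrary precision $\epsilon$ by the corollary of Lemma \ref{rotations}, with the integer witnesses $a,b$ being computed by the layer-2 Turing machine from the binary expansion of $N$. This density argument is already packaged inside Lemma \ref{two-level-reduced}, so no new combinatorial work is needed here.

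The main thing to verify carefully is that composition of the Ising simulation and the tiling simulation preserves all three requirements of Definition \ref{simulation_definition} simultaneously, and that the $\tilde{\Delta}$ chosen for Lemma \ref{two-level-reduced} can be taken large enough to dominate both the energy cut-off inherited from the intermediate Ising simulation and the $N^2$ worst-case bonus from negative terms at non-optimal tilings. Since the theorem asks only for universality (not the efficiency bounds of Definition \ref{univ_2}), the worsened $N = O(2^{2^k/\epsilon})$ scaling coming from the precision encoding is harmless; it just means we do not obtain an efficiency claim in the one-parameter case, which is the expected trade-off for dropping a parameter.
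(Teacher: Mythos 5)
Your proposal follows essentially the same route as the paper: the paper's proof of Theorem \ref{main-reduced} is literally "This proof follows exactly the same argument as Theorem \ref{main}," which in turn invokes Lemma \ref{two-level} (here Lemma \ref{two-level-reduced}), universality of 2D Ising with fields from \cite{Cubitt:2016}, and transitivity of simulation. Your additional remarks on verifying composition of simulations, the role of the irrational-density machinery already absorbed into Lemma \ref{two-level-reduced}, and the relaxed efficiency requirement are all consistent elaborations of what the paper leaves implicit.
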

\begin{proof}
This proof follows exactly the same argument as Theorem \ref{main}.
\end{proof}

As in the two parameter case, we can upper bound the local spin dimension of the universal Hamiltonian by assuming that a universal Turing machine is used in layer 2 of the tiling. This gives an upper bound of 50,920. As before, the problem of optimising the construction to reduce the local spin dimension is left to future work.

We have now established the existence of a translationally invariant, universal family of Hamiltonians which require only one parameter. On the other hand, our construction now has a number of simulator spins that scales as $O(2^{\frac{n^k}{\epsilon}})$, whereas before it scaled as $O(2^{n^k})$, and the energy penalty in the Hamiltonian now requires an offset that scales as $O(N^2)$. The scaling of $N$ with $\epsilon$ is inevitable, as removing the $\alpha$ parameter in the Hamiltonian means that the precision information has to be encoded in $N$. However, it is not immediately clear whether this scaling of the $\tilde{\Delta}$ term is necessary.

We first note that it is not possible to have a universal model without a precision parameter unless the Hamiltonian contains `energy bonuses':

\begin{lemma}
Let $H^{(x)} = \sum_i h^{(x)}$ be a translationally invariant family of Hamiltonians, where $x$ denotes the parameters specifying Hamiltonians in the model and the sum is over all spins in the simulator system. If $H^{(x)}$ is universal and contains no precision parameter, then the spectrum of $h^{(x)}$ contains at least one energy level which is below that corresponding to the ground state energy on the target systems.
\end{lemma}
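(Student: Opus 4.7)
I would argue by contradiction. Suppose $H^{(x)}=\sum_i h_i^{(x)}$ is a translationally invariant universal family whose only parameter is the lattice size $N$, and suppose the hypothesis fails: every value in $\mathrm{spec}(h^{(x)})$ is at least as large as the value $e^{*}$ the local term takes in the simulator configuration encoding the target ground state. The plan is to show that this forces a fixed, $N$-independent positive lower bound on every excitation gap of $H^{(x)}$, and that this obstructs the approximation requirement of Definition~\ref{simulation_definition} for targets with arbitrarily small low-lying gaps.

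\textbf{Key steps.} First, I would fix a target Hamiltonian $H'$ and let $\sigma_{0}$ denote the simulator configuration encoding the target ground state, with simulator energy $E_{0}=H^{(x)}(\sigma_{0})$. Using translational invariance and the assumption on $\mathrm{spec}(h^{(x)})$, I would identify the minimal strictly positive gap
\begin{equation}
\Delta_{\min}\;:=\;\min\bigl\{\,e-e^{*}\;:\;e\in\mathrm{spec}(h^{(x)}),\;e>e^{*}\,\bigr\},
\end{equation}
which depends only on the model, not on $N$ or on the target. Second, I would argue that any simulator configuration $\sigma\neq\sigma_{0}$ must change at least one local term away from the value $e^{*}$, and every such change adds a non-negative amount (by hypothesis) that is at least $\Delta_{\min}$ for the changed term and at least $0$ for every other term. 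Hence every excitation of $H^{(x)}$ above $E_{0}$ has energy at least $E_{0}+\Delta_{\min}$, uniformly in $N$. Third, I would invoke universality to derive a contradiction: choose any target $H'$ with a first excited gap $0<\epsilon<\Delta_{\min}$ (a single two-level spin with energies $0$ and $\epsilon$ suffices, and is simulatable by Lemma~\ref{two-level}). For any precision $\delta<(\Delta_{\min}-\epsilon)/2$, universality demands some $H^{(x)}$ in the model whose low-lying energy levels are within additive error $\delta$ of those of $H'$; but no simulator energy lies within $\delta$ of $E_{0}+\epsilon$, since the nearest candidate is at least $E_{0}+\Delta_{\min}>E_{0}+\epsilon+\delta$. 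This contradicts item~\ref{part:eigenvalues} of Definition~\ref{simulation_definition} and so completes the proof.

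\textbf{Main obstacle.} The delicate point is to interpret ``the value corresponding to the ground state energy on the target systems'' for an arbitrary model and to show the lower bound $\Delta_{\min}$ is truly $N$-independent. In principle the encoding of the target ground state need not be spatially uniform, so $e^{*}$ could vary across bonds; but by translation invariance, the set of values the local term can legally take at any position is the same, and the bound only uses that \emph{some} bond is raised by at least $\Delta_{\min}$, with all others contributing at least their value in the encoding. A second subtlety is that Definition~\ref{simulation_definition} allows the partition function to be rescaled by $\mu$, but the energy-level condition is a direct additive match, so $\Delta_{\min}$ cannot be rescaled away by choosing a different $\mu$. Once these points are handled, the pigeonhole-style obstruction is robust, and establishes that any single-parameter (or zero-parameter) translationally invariant universal model must contain local ``energy bonus'' configurations strictly below those realised in the ground-state encoding.
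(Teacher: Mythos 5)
Your argument is essentially the paper's: the finiteness of $\mathrm{spec}(h)$ yields an $N$-independent minimum excitation gap above the ground-state local value, and a target Hamiltonian with a smaller low-lying gap then violates condition~\ref{part:eigenvalues} of Definition~\ref{simulation_definition}. The one point where your proposal is looser than the paper is exactly the one you flag under ``Main obstacle'': you assert that every configuration $\sigma\neq\sigma_0$ changes some bond away from $e^*$ and that all other bonds contribute ``at least their value in the encoding,'' but this only follows if every bond of $\sigma_0$ already sits at the minimum of $\mathrm{spec}(h)$. Translational invariance of $h$ does not by itself force the ground-state encoding to be spatially uniform, so as written the step ``$H(\sigma)\geq H(\sigma_0)+\Delta_{\min}$ for all $\sigma\neq\sigma_0$'' is not fully justified: a bond that sat above $e^*$ in $\sigma_0$ could drop, giving excitations smaller than $\Delta_{\min}$. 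The paper closes this by first shifting $H$ by a constant so that $H(\sigma_0)=0$; under the contradiction hypothesis $\mathrm{spec}(h)\subseteq[0,\infty)$, the identity $\sum_i h_i(\sigma_0)=0$ with each summand non-negative forces $h_i(\sigma_0)=0$ for every $i$, so $e^*=0$ on every bond and your second step then goes through verbatim. With that normalization inserted, your proof coincides with the paper's.
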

\begin{proof}
We can assume without loss of generality that the energy level of the universal Hamiltonian that maps to the ground state energy of the target Hamiltonian is the zero point energy of the universal Hamiltonian, as this can always be achieved by a constant energy shift to the universal Hamiltonian. The lemma can therefore be rephrased as stating that the energy spectrum of $h^{(x)}$, denoted $\text{spec}(h)$, contains at least one negative value.

Let us assume that there exists a translationally invariant universal model where $\text{spec}(h)$ contains only positive values. Now the local Hamiltonian $h^{(x)}$ must act on some fixed number of fixed dimension spins, so the set $\text{spec}(h) \setminus \{0\}$ is finite. Any finite set of real numbers contains a minimum, so $\text{spec}(h) \setminus \{0\}$ contains a minimum, which we will denote $\lambda_{\min}$.

If we consider the overall Hamiltonian $H^{(x)} = \sum_i h^{(x)}$ acting on some large set of simulator spins then the spectrum of $H^{(x)}$, denoted $\text{spec}(H)$, can only contain different combinations of the energy levels of $h(x)$ added together. So $\text{spec}(H) \setminus \{0\}$ must have a minimum at least as large as the minimum of $\text{spec}(h) \setminus \{0\}$ (in general it will be larger). Hence $H^{(x)}$ cannot simulate any Hamiltonian with a non-zero energy level lower then $\lambda_{\min}$. Whichever $\lambda_{\min}$ we choose we can construct a Hamiltonian with non-zero energy lower than $\lambda_{\min}$ that the $H^{(x)}$  cannot simulate, hence $H^{(x)}$ is not universal.
\end{proof}

We have established that any two parameter translationally invariant model without a precision parameter requires `energy bonuses' in the Hamiltonian. This suggests that some offset to $\tilde{\Delta}$ will always be necessary in order to ensure that non-optimal tilings are pushed to energy above $\Delta$.

Reducing the size of the offset would be possible if the computation encoded in the tiling rules can be carried out ``fault tolerantly'', so that the number of energy bonuses incurred by making a single tiling error somewhere in the lattice is bounded. There does exist a construction for a 1-dimensional fault tolerant cellular automata \cite{Gacs:2001}, which could be encoded into the tiling rules, suggesting the possibility of carrying out the computation fault tolerantly and reducing the offset to $\tilde{\Delta}$. However, the construction involves a complicated encoding scheme, and it is not obvious that it would be possible to decode the output of the computation fault tolerantly, so an error in the decoding procedure could still lead to an unbounded number of errors in the output of the simulator, which would give $O(N)$ energy bonuses in the final output. There also exists a 3-dimensional tolerant cellular automata which fault tolerantly simulates a Turing machine; this could be encoded into tiling rules on a 4-dimensional grid, and it requires no decoding \cite{Gacs:1988}. This simpler fault tolerant device seems more likely to provide a route to reducing the offset to $\tilde{\Delta}$, with the caveat that it would only apply to 4-dimensional lattices.

Whether this, or another approach, can be successfully applied to produce a universal model requiring only one parameters with a better scaling of the $\tilde{\Delta}$ offset, is an open question which we leave to future research. However it should be noted that in the worst case $E_{\max}(H')$ can scale as $O(N) = O(2^n)$. In this case, even if the scaling of $\tilde{\Delta}$ with $N$ were improved, we would require it to scale as $O(2^n)$. So our construction already achieves the optimal worst-case asymptotic scaling of $\tilde{\Delta}$ with $n$.

\subsection{Necessity of one parameter to specify a universal model} \label{no-parameters}
We will now demonstrate that the one-parameter universal model is optimal in terms of the number of parameters required to describe a universal model.

\begin{theorem}
It is not possible to construct a translationally invariant universal model in which the number of spins the Hamiltonian acts on is the only variable.
\end{theorem}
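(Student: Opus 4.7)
The plan is to assume the existence of a zero-parameter translationally invariant universal model and derive a contradiction. Suppose $\{H_N\}_{N\in\mathbb{N}}$ is such a family, so every $H_N = \sum_i h^{(i)}$ is built from a single fixed nearest-neighbour two-body interaction $h$ taking values in a fixed finite set. Because each spin participates in only four local terms and the range $\max h - \min h$ is fixed, we obtain a one-spin-flip Lipschitz bound: for any configuration $\sigma$ and any modification $\sigma \mapsto \tilde\sigma$ at a single site, $|H_N(\tilde\sigma) - H_N(\sigma)| \leq C'$ where $C'$ is a constant independent of $N$ and of the configuration. This bound generalises straightforwardly to any finite-range, finite-arity, translationally invariant local interaction, with a larger but still $N$-independent $C'$.

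I would then consider the family of target Hamiltonians $H'_\lambda$ on a single two-level spin with energies $0$ and $\lambda > 0$. By Definition \ref{simulation_definition}, for every $\lambda$, every $\Delta > \lambda$, and every $0 < \delta < 1$, universality supplies some $H_N$ and some fixed subset $P_1$ of its spins such that configurations of $\sigma|_{P_1}$ are uniquely identified with the two target states, every simulator energy level below $\Delta$ is within $\delta$ of one of the two target energies, and every simulator configuration with energy below $\Delta$ has a physical part encoding a valid target state.

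The core step is the following. Let $\sigma^\star$ be a minimum-energy configuration of $H_N$ whose physical spins encode the target ground state; such a configuration exists with $H_N(\sigma^\star) < \Delta$ because the ground-state level $E'_0 = 0$ must be reproduced, and hence $H_N(\sigma^\star) \in [-\delta, \delta]$. Now flip a single spin of $\sigma^\star$ lying in $P_1$ to obtain $\tilde\sigma$. Since the physical encoding has necessarily changed (by injectivity of the identification), $\tilde\sigma$ either encodes the excited target state (in which case its energy is either within $\delta$ of $\lambda$ or already exceeds $\Delta > \lambda$) or encodes no valid target state (in which case its energy must exceed $\Delta$ by the second part of the simulation definition). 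In both cases $H_N(\tilde\sigma) \geq \lambda - \delta$. On the other hand the one-spin-flip bound yields $H_N(\tilde\sigma) \leq H_N(\sigma^\star) + C' \leq \delta + C'$, and combining the two inequalities gives $\lambda \leq 2\delta + C'$.

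Since $C'$ is a fixed constant determined once and for all by $h$, any target with $\lambda > C'$ forces $\delta \geq (\lambda - C')/2 > 0$, contradicting the universality requirement that simulation be achievable for \emph{every} $\delta > 0$. Choosing any $\lambda > C'$ together with any $\delta < (\lambda - C')/2$ therefore exhibits a concrete target Hamiltonian that no zero-parameter translationally invariant model can simulate, completing the proof. The main subtlety I anticipate is handling the encoding carefully: one must verify both that flipping a spin in $P_1$ genuinely changes the physical encoding (which follows from the injective identification of target states with physical-spin patterns) and that a minimum-energy configuration with the correct physical encoding exists below $\Delta$ (which follows from the requirement that the target ground-state level be reproduced).
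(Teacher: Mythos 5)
Your proposal is correct, and it takes a genuinely different route from the paper's proof, although both start from the same key observation that the single-spin-flip energy difference is bounded by an $N$-independent constant $C'$ (because the two-body interaction $h$ is fixed once and for all in a zero-parameter model, and each spin sits in a bounded number of local terms). The paper uses this observation to conclude that, for $\Delta$ sufficiently large, the entire configuration space of the simulator is connected by spin flips that never leave the low-energy sector, so \emph{all} $d^S$ configurations must map to energy levels of $H'$ with identical multiplicity $\mu$; it then finishes with a number-theoretic counting argument, choosing the number of target configurations $M$ coprime to $d$ so that $d^S/M\notin\mathbb{Z}$. Your argument instead uses the Lipschitz bound directly: targeting a single two-level spin with gap $\lambda$, a single flip inside the physical subset $P_1$ must change the encoded target state (by the uniqueness of the identification in Definition \ref{simulation_definition}), forcing the post-flip energy to be at least $\lambda-\delta$ (either it encodes the excited state with energy within $\delta$ of $\lambda$, or it exceeds the cut-off), while the Lipschitz bound caps it at $\delta+C'$; taking $\lambda > 2+C'$ kills the claim for every $\delta\in(0,1)$. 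What you gain is a conceptually cleaner and more quantitative obstruction, with no appeal to degeneracies or coprimality, and a physical reading of the failure mode: a zero-parameter translationally invariant Hamiltonian has a fixed single-spin-flip energy scale that a universal model would need to exceed. The one place to be careful, which you already flag, is that your step ``no valid encoding $\Rightarrow$ energy $\geq\Delta$'' relies on reading clause 2 of Definition \ref{simulation_definition} as requiring every configuration with $H(\sigma)<\Delta$ to have decodable physical spins; this is the intended reading (the paper's own argument also needs it implicitly when it asserts that every configuration maps to an energy level of $H'$), so the step is sound.
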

\begin{proof}
Assume that there exists a translationally invariant Hamiltonian, $H_{U}$, which takes no parameters other than the number of spins it is acting on, and which is universal. Write $H = \sum_i h$ where $h$ is some $k$-local Hamiltonian containing $m$ terms, acting on $d$-dimensional Ising spins, and where the sum is over all $S = N \times N$ spins on the simulator system. Denote the spectrum of $h$ by $\text{spec}(h) = \{\mu_i\}$. Let $\mu_{max} = \max\{\mu_i\}$ and $\mu_{min} = \min\{\mu_i\}$, where by the minimum we mean the largest magnitude negative value in the set.\footnote{As we saw in the previous section any universal model without a precision parameter will need at least one negative value in its spectrum.}

Consider a target Hamiltonian $H'$, where the spectrum of $H'$ is given by $\text{spec}(H') = \{\lambda_i\}$. As $H_U$ is a universal Hamiltonian it must be able to simulate $H'$ up to arbitrary precision $\epsilon$, and below arbitrary energy cut-off $\Delta$. We saw in the previous section that it is possible to encode the precision into $N$, now we also have to encode the energy cut-off into $N$.

Let us assume that we \emph{can} encode $\Delta$ into $N$ in some way, and that when $H$ acts on $S = N \times N$ spins it simulates $H'$ below energy $\Delta$. Consider a configuration, $\sigma$, of the $S$ spins which is below the energy cut-off, so that $E(\sigma) = \lambda_x \in \{\lambda_i\}$. Now take one spin in the configuration, and change its state to one of the other $d-1$ possible spin states. The maximum energy difference such a change in state can have is bounded by $\mu_{max} - \mu_{min}$.

Now consider simulating $H'$ with a different energy cut-off $\Delta'$, where $\Delta' > \Delta$. We must have that this simulation involves $H$ acting on $S' = N' \times N'$ spins, where $N' \neq N$ for sufficiently larger $\Delta'$. Consider again a configuration $\sigma'$ of the spins, such that $E(\sigma') = \lambda_x$. We can again consider taking one spin in the configuration and changing its state, and the maximum energy difference such a change in state can have will again be bounded by $\mu_{max} - \mu_{min}$. We see that if we are in a spin configuration that is below the energy cut-off, then the energy difference between the current spin configuration and one which differs from the current configuration by the state of just one spin is bounded, and in particular is independent of $N$. Therefore, this energy difference cannot be made arbitrarily large by altering $N$. Hence in order to satisfy Definition \ref{simulation_definition} of simulation, the new spin configuration we reach by altering the state of one spin must also be below the energy cut-off.

We have not restricted $\lambda_x$ in any way, hence this argument must hold for the spin configuration reached via one spin flip from any state below the energy cut-off. So if we start in a configuration with energy $\lambda_x \in \{\lambda_i\}$, after one spin flip we are in another configuration with energy $\lambda_y \in \{\lambda_i\}$, and if we carry out another single spin flip we will again be in a configuration below the energy cut-off.\footnote{In general the spin flip may take us to another configuration below the energy cut-off, or leave us in the same configuration, the important point is that either way the resulting configuration is still below the energy cut-off.} Now, all of the $d^S$ possible spin configurations can be reached by starting in the configuration $\sigma$, and carrying out a series of single-spin flips. Each single-spin flip takes us to a different configuration below the energy cut-off, so we never exceed the energy cut-off. Hence we find that every possible configuration of the $S$ spins must map to one of the energy levels of $H'$.

This argument holds regardless of whether we were trying to achieve the energy cut-off of $\Delta$ or $\Delta'$. Thus we find that whenever $H_U$ is simulating a Hamiltonian $H'$, every spin configuration of the spins in the simulator system must map to one of the energy levels of $H'$.

However, in order to correctly simulate $H'$ it is not enough to reproduce the energy levels. We must also introduce the same additional degeneracy to each energy level, so that the partition function is reproduced. The number of possible spin configurations of $S$ simulator spins is $d^S$. If we denote the number of spin configurations of the original spins by $M$, the requirement of introducing the same additional degeneracy to each energy level of the original system enforces that $\forall M$, $\exists S$ such that $\frac{d^S}{M} \in \mathbb{Z}$.

To see that this isn't possible, consider arbitrary $d$, and select $M$ such that $d$ and $M$ are coprime, i.e. so that the greatest common factor of $d$ and $M$ is 1.\footnote{For any $d$ such an $M$ can always be found. Any prime number greater than $d$ will be coprime with $d$, and there exists an infinite number of prime numbers.} If we let the prime factors of $d$ (ignoring multiplicities) be the set $P = \{p_i\}$, and the prime factors of $M$ be the set $P' = \{p'_j\}$, then from the fact that $d$ and $M$ are coprime we have that $P \cap P' = \emptyset$. Now consider the prime factors of $d^S$. Ignoring multiplicities, the set of prime factors of $d^S$ is the same as the set of prime factors of $d$, so we find that $M$ and $d^S$ have no common factors greater than 1, and are coprime. Hence $\frac{d^S}{M}$ is an irreducible fraction, and in particular $\frac{d^S}{M} \notin \mathbb{Z}$. Therefore for any $d$ we choose there will exist some Hamiltonians which cannot be simulated. Hence our one-parameter, translationally invariant Hamiltonian cannot be universal.
\end{proof}

It should be noted that the requirement that the energy cut-off $\Delta$ can be made arbitrarily large is crucial in order to ensure that the partition function can be approximated arbitrarily well. If we allowed a definition of simulation in which the energy cut-off was fixed we would only be able to reproduce the partition function to within some constant error.

\subsection{A translationally invariant model on $\text{poly}(n)$ spins}
Finally, we note that it would be possible to construct a universal model on $\text{poly}(n)$ spins by allowing the Turing machine that is encoded in the tiling rules to depend on what Hamiltonian is being simulated, which would vastly increase the number of parameters in the model.

In both the two-parameter and one-parameter universal constructions the computation only takes time $\text{poly}(n)$ to run. The reason we need a grid of size $O(2^{n^k})$ is to allow the binary counter Turing machine to provide a description of the Hamiltonian being simulated. If instead of a universal Turing machine we used the tiling rules to encode a Turing machine which specifically computes the energy levels of the target Hamiltonian, then there would be no need for the binary counter Turing machine, and we could simply run the non-universal Turing machine on $\text{poly}(n)$ spins.

The number of parameters needed to specify such a universal model would be large. In particular, since there are only a finite number of Turing machines with a given alphabet, but an infinite number of Hamiltonians we might want to simulate, the alphabets of the Turing machines, and hence the local state space of Hamiltonians in the model, would not be fixed.

\section{Discussion} \label{discussion}

Our main result is that there exists a translationally invariant Hamiltonian which can replicate all classical spin physics, just by tuning one parameter in the Hamiltonian, and varying the number of spins the Hamiltonian is acting on. We mentioned in Section \ref{main_results} that this result has some implications for complexity theory, but perhaps the more interesting implications are those for classical many-body physics.

It was commented on in \cite{Cubitt:2016} that the existence of universal models implies that the properties of classical many-body systems are not determined solely by the symmetries or number of spatial dimensions of the system, nor by the structure of their interaction graph. However, all the universal models constructed in \cite{Cubitt:2016} required inhomogeneous coupling strengths, and it was suggested that the inhomogeneity of the couplings could account for this. We can now go further, and state that even for translationally invariant models with homogenous couplings the properties of the system are not determined by the dimensionality or symmetry of the system, or by the interaction structure, since all physical properties of e.g. 3D spin systems with complicated interaction graphs can be simulated on a 2D square lattice with nearest-neighbour, translationally invariant interactions.

The price that we have paid for translational invariance is that the local dimension of the spins is higher than in the non-translationally invariant case. In the non-translationally invariant case well studied models such as the 2D Ising model with fields were shown to be universal, whereas in the translationally invariant case we have demonstrated universality by explicitly constructing universal models, which have not previously been studied.

It should be noted that in the classical setting physically simulating one spin model using another is unlikely to be of much practical use, as numerical simulations will usually be more efficient. As such the significance of this classical result lies more in what it tells us about the properties of classical spin models, than in technological applications of the result.

When dealing with quantum systems, however, we can no longer efficiently simulate them numerically using a classical computer. Quantum systems can be efficiently numerically simulated using a quantum computer \cite{Lloyd:1996}, however this requires a large-scale, fault-tolerant quantum computer, which is beyond the reach of current technology. This has led to interest in the concept of analogue quantum Hamiltonian simulation -- simulating a quantum system by engineering the Hamiltonian and measuring its properties. There has been experimental progress in designing quantum simulators using a range of implementations \cite{Houck:2012}, \cite{Blatt:2012}, \cite{Bloch:2012}; and theoretical work has provided a rigorous framework for analogue Hamiltonian simulation \cite{Cubitt:2017}.

Within the rigorous framework of analogue quantum Hamiltonian simulation defined in \cite{Cubitt:2017}, there exist quantum spin models that are universal, in the sense that they can simulate the entire physics of any other quantum many-body system. As in the classical case, some of the universal spin models are surprisingly simple, with examples including the Heisenberg, and XY-interaction \cite{Cubitt:2017}. However, every known universal quantum model breaks translational invariance. It would be interesting to consider whether there exists a quantum analogue to results in this paper -- is it possible to construct a universal, translationally invariant quantum model?

In \cite{Gottesman:2009} it was demonstrated that the hardness of calculating the ground state energy of a one-dimensional translationally invariant quantum system is QMA$_{\scaleto{\mbox{EXP}}{5pt}}$-complete, suggesting that in the quantum regime translational invariance is not a barrier to complexity. This work may prove a route to extending our result to the quantum case. But it should be noted that the quantum notion of simulation is more involved than the classical case, and it is not clear that there exists a straightforward generalisation of our result to the quantum case.

\begin{acknowledgements}
TK is supported by the EPSRC Centre for Doctoral Training in Delivering Quantum Technologies [EP/L015242/1].
TC is supported by the Royal Society.
This work was supported by the EPSRC Prosperity Partnership in Quantum Software for Simulation and Modelling (EP/S005021/1).
\end{acknowledgements}

\bibliographystyle{spmpsci}      
\bibliography{References}   

\end{document}